\newenvironment{vctr}
{\left[\begin{array}{c}}
{\end{array}\right]}
\newtheorem{theorem}{Theorem}
\begin{document}

\preprint{APS/123-QED}

\title{Multi-player conflict avoidance through entangled quantum walks}


\author{Honoka Shiratori}
\affiliation{Department of Information Physics and Computing, Graduate School of Information Science and Technology, The University of Tokyo, 7--3--1 Hongo, Bunkyo--ku, Tokyo 113--8656, Japan.}
\author{Tomoki Yamagami}
\email{tyamagami@mail.saitama-u.ac.jp}
\affiliation{Department of Information and Computer Sciences, Saitama University,
255 Shimo--okubo, Sakura--ku, Saitama City, Saitama 338--8570, Japan.}
\author{Etsuo Segawa}
\affiliation{Graduate School of Environment and Information Sciences,
Yokohama National University, 79--1 Tokiwadai, Hodogaya--ku, Yokohama, Kanagawa 240--8501, Japan.}
\author{\\Takatomo Mihana}
\affiliation{Department of Information Physics and Computing, Graduate School of Information Science and Technology, The University of Tokyo, 7--3--1 Hongo, Bunkyo--ku, Tokyo 113--8656, Japan.}
\author{Andr\'e R\"ohm}
\affiliation{Department of Information Physics and Computing, Graduate School of Information Science and Technology, The University of Tokyo, 7--3--1 Hongo, Bunkyo--ku, Tokyo 113--8656, Japan.}
\author{Ryoichi Horisaki}
\affiliation{Department of Information Physics and Computing, Graduate School of Information Science and Technology, The University of Tokyo, 7--3--1 Hongo, Bunkyo--ku, Tokyo 113--8656, Japan.}

\begin{abstract}
Quantum computing has the potential to solve complex problems faster and more efficiently than classical computing. It can achieve speedups by leveraging quantum phenomena like superposition, entanglement, and tunneling.
Quantum walks (QWs) form the foundation for many quantum algorithms. Unlike classical random walks, QWs exhibit quantum interference, leading to unique behaviors such as linear spreading and localization. These properties make QWs valuable for various applications, including universal computation, time series prediction, encryption, and quantum hash functions.  
One emerging application of QWs is decision making. Previous research has used QWs to model human decision processes and solve multi-armed bandit problems. This paper extends QWs to collective decision making, focusing on minimizing decision conflicts—cases where multiple agents choose the same option, leading to inefficiencies like traffic congestion or overloaded servers.  
Prior research using quantum interference has addressed two-player conflict avoidance but struggled with three-player scenarios. This paper proposes a novel method using QWs to entirely eliminate decision conflicts in three-player cases, demonstrating its effectiveness in collective decision making.

\end{abstract}

\maketitle



\section{Introduction}
The continuing increase in computational demand 
has attracted attention to new ways of computing~\cite{mehonic2022brain}.
In particular, quantum computers utilize tunneling, entanglement, and superposition and can potentially finish calculations in much shorter times and with less energy than traditional classical computing~\cite{meier2023energy}.
Exponential speedups attainable by quantum computing can make problems solvable that are impractical for classical ones to deal with.
Various quantum algorithms have been proposed in anticipation of future quantum computer hardware.
Famously, Grover's algorithm can extract an item from a randomly ordered list of $N$ items with only $O(\sqrt{N})$ iterations \cite{grover1996fast}, beating the fastest classical search algorithm \cite{bennett1997strengths}. 
Other examples are the Harrow-Hassidim-Lloyd algorithm that approximates the solution of simultaneous linear equations \cite{Harrow2009quantum}, and Shor's algorithm factoring an integer $N$ in time polynomial in the number of bits in $N$ \cite{Shor1994algorithms}.

The quantum walk (QW) is one basis for quantum algorithms and can be used for walking a graph \cite{douglas2009efficient}, finding target vertices \cite{magniez2007search}, and examining whether there is a way to get from node $s$ to node $t$ by following the edges of the graph ($s$-$t$ connectivity) \cite{kendon2011quantum}.
The QW is the quantum version of the classical random walk.
Whereas the classical random walker selects directions stochastically at each time step, and observers can always track its location, the quantum walker's location is unknown until measurement.
Due to quantum interference with itself, the probability distribution of a QWer is different from a classical random walker and features linear spreading and localization~\cite{konno2008quantum}.
Here, linear spreading describes the effect that the standard deviation of the probability distribution increases in proportion to time, i.e., it spreads faster than that of a classical random walker, whose standard deviation only grows as the square root of time.
Conversely, via localization, a specific position can retain a high probability no matter how long the walk continues for.

Various applications of QWs are known beyond finding graph properties.
The characteristic behaviors of QWs make them useful as a modeling tool to make quantum models of classical counterparts and design quantum automata.
For example, QWs can be a tool for universal computation \cite{childs2009universal}, predicting time series \cite{konno2019new}, encrypting images \cite{abd2019novel, abd2019encryption}, and realizing quantum hash functions \cite{yang2016quantum}.
One research field to which QWs have recently been applied is decision making.
Firstly, QWs have been used to model humans' decision-making processes \cite{martinez2016quantum, zhang2021quantum, busemeyer2006quantum, yan2022information}.
Furthermore, it can be used to find good options in exploitation-exploration dilemmas of multi-armed bandit problems \cite{yamagami2023bandit}.

In this paper, we apply QWs to collective decision-making, that is, decision problems involving multiple parties.
In particular, we present methods for how to avoid unwanted decision conflicts, i.e., multiple actors choosing the same option.
Decision conflicts model real-life scenarios of lost utility, such as when people choose the same option as their own to other's detriments. 
Examples include when too many cars go to the same road and cause traffic jams or when servers go down due to too many simultaneous access requests.

There has been some previous research on utilizing quantum effects for quantum computing by relying on photonics.
Amakasu et al. proposed a quantum interference system based on photons in which two players always avoid decision conflicts and choose different options \cite{amakasu}.
However, completely avoiding decision conflicts between three or more people with that quantum interference system has not been achieved in that setting.
In contrast, this paper applies QWs to collective decision making and details how to to avoid decision conflicts.
By manipulating the transition probabilities of the QWs the system avoids decision conflicts in three-player cases.

\section{QW on a 1D lattice or cycle}
This section gives a brief introduction to QWs and the relevant notation.
We focus on discrete-time QWs, where every node has inner states and their probability amplitudes evolve at each step.

\begin{figure}[t]
    \centering
    \includegraphics[width=0.45\textwidth]{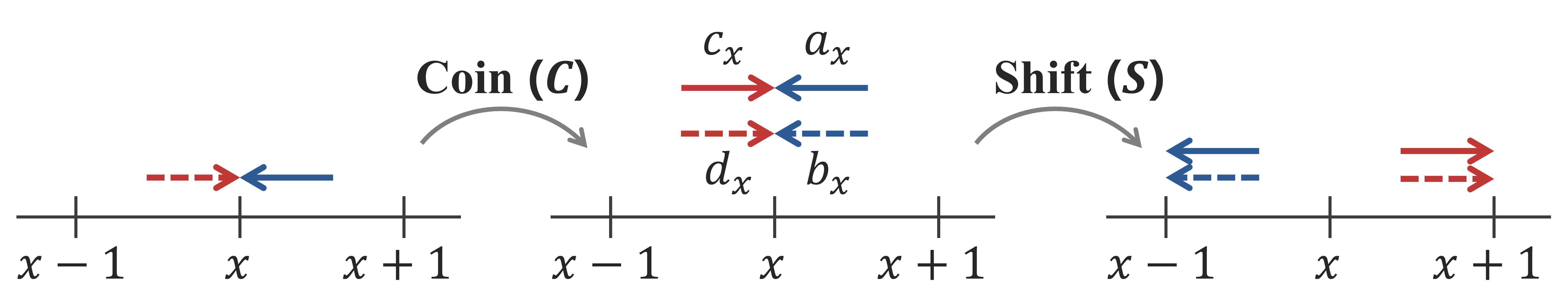}
    \caption{Schematic of QW on a 1D lattice. Each node has inner states: leftward and rightward. The coin matrix is applied to inner states, and new superposition of inner states is produced as a result (See Eq.~\eqref{eq:1d_coin}). The shift operator moves the leftward inner state to the left adjacent node and the rightward inner state to the right adjacent node  (See Eq.~\eqref{eq:1d_shift}).}
    \label{qw1}
\end{figure}

\subsection{Mathematics of Discrete QW on 1D lattice or cycle}\label{subsec:qw1d}

The state of a QW on a one-dimensional lattice is described by the superposition of states regarding the position, which is labeled by integers in order, and the two inner states, leftward ($L$) and rightward ($R$). Mathematically, a state $\Psi$ of the QW is a unit vector belonging to the compound Hilbert space
\begin{align}
    \mathcal{H} &= \mathcal{H}_{p}\otimes \mathcal{H}_{c} \notag \\
    &:= \operatorname{span}\{\ket{x}\otimes \ket{J}\,|\,x\in\mathbb{Z},\ J\in \{L,\,R\}\},
\end{align}
and thus, for any state $\Psi\in\mathcal{H}$, there exists a collection $(\alpha_{x,\,J}\,|\,x\in\mathbb{Z},\,J\in\{L,\,R\})$ satisfying
\begin{align}
    \Psi &= \sum_{x\in\mathbb{Z}}\sum_{J\in\{L,R\}}\alpha_{x,J}\ket{x}\otimes \ket{J} \notag \\
    &= \sum_{x\in\mathbb{Z}} \ket{x}\otimes (\alpha_{x,L}\ket{L} +\alpha_{x,R}\ket{R})
\end{align}
with $\sum_{x\in \mathbb{Z}}\sum_{J\in\{L,R\}}|\alpha_{x,J}|^2 = 1$. Herein, the absolute square of the probability amplitudes of $\ket{x}\otimes\ket{L}$ and $\ket{x}\otimes\ket{R}$ adds up to the chance of observing a walker at node $x$; that is, the probability with which the walker is found at $x\in\mathbb{Z}$ through measurement under the state $\Psi\in\mathcal{H}$ is denoted by
\begin{equation}
    p(x) = |\alpha_{x,L}|^2 + |\alpha_{x,R}|^2.
\end{equation}
The time evolution of a QW on a one-dimensional lattice is depicted in Fig.~\ref{qw1}.
The operator of the time evolution is composed of the shift and the coin operators. 
The sum of observation probabilities at all nodes, or the total of absolute squares of all inner states, must always equal unity, which requires that both operators are unitary.

The coin operator $C_1$ comprises local coin operators $C(x)$ at node $x\in\mathbb{Z}$. 
The local coin operator $C(x)$ converts the superposition of inner states at node $x$ at the current time step to that at the subsequent step.
Let vectors $[1,~0]^\top$ and $[0,~1]^\top$ be assigned to $L$ and $R$, respectively, i.e., $\ket{L}=[1,~0]^\top,~\ket{R}=[0,~1]^\top$, respectively. Herein, the superscript $^\top$ indicates the transpose. Then, denoting $C(x)$ as 
\begin{equation}
    C(x)=\left[\begin{array}{cc}
        a_x & b_x\\
        c_x & d_x
    \end{array}\right],
\end{equation}
the new inner states generated by the coin are
\begin{equation}
    \begin{split}
        C(x)\ket{L} =a_x\ket{L}+c_x\ket{R},\\
        C(x)\ket{R} =b_x\ket{L}+d_x\ket{R}.
    \end{split}
    \label{eq:1d_coin}
\end{equation}
The whole coin operator over $\mathcal{H}$ is defined as 
\begin{equation}
    C_1=\sum_{x} \ket{x}\bra{x} \otimes C(x).
\end{equation}
The shift operator $S_1$ shifts the leftward inner state at node $x$ to the left-adjacent one $x-1$ and the rightward inner state at node $x$ to the right-adjacent one $x+1$:
\begin{equation}
    \begin{split}
        S_1\ket{x}\otimes \ket{L} &= \ket{x-1}\otimes\ket{L},\\
        S_1\ket{x}\otimes\ket{R} &= \ket{x+1}\otimes\ket{R}.
    \end{split}
\end{equation}
That is, the operation of $S_1$ is described by
\begin{equation}
    S_1 = \sum_{x} \bigl(\ket{x-1}\bra{x} \otimes \ket{L}\bra{L} + \ket{x+1}\bra{x} \otimes \ket{R}\bra{R}\bigr).
    \label{eq:1d_shift}
\end{equation}
The time evolution of QWs can be realized by combining coin and shift.
The time evolution matrix of the QW is $W_1=S_1\cdot C_1$. That is, the evolving sequence of states $(\Psi^{(t)}\,|\,t = 0,\,1,\,\cdots)$ is described by
\begin{equation}
    \Psi^{(t+1)} = W_1\Psi^{(t)}.
\end{equation}

QWs on a cycle with $N$ nodes can be set up similarly except that the shift operator satisfies $1-1=N,~N+1=1$ at the endpoints $1,~N$. 
In this case, the QW state can be represented by a $2N$ dimensional vector $\Phi \in \mathbb{C}^{2N}$, which is spanned by vectors $\ket{1}\otimes\ket{L},~\ket{1}\otimes\ket{R},~\cdots,~\ket{N}\otimes\ket{L},~\ket{N}\otimes\ket{R}$.
It is the tensor product of a vector corresponding to the nodes and that corresponding to inner states, $\{\ket{1},\cdots, \ket{N}\} \otimes \{\ket{L}, \ket{R}\}$.

\subsection{Average probability distribution}
The average probability distribution is defined as
\begin{equation}
    \bar{p}_T(x)=\frac{1}{T} \sum_{t=0}^{T-1}p_t(x),
\end{equation}
where $p_t(x)$ is the probability of observing a walker at node $x$ at time $t$~\cite{portugal2013quantum}.
The average probability distribution is known to converge at $T\rightarrow \infty$ and is used in the analysis of QWs.
This paper shows the average probability distribution up to time $T=200$ obtained from numerical simulations.

\subsection{Relating QWs to decision making}
QWs on a 1D lattice or cycle can be utilized to model a player's decision making. 
``decision making" is a subset of reinforcement learning and describes the process by which an agent selects actions in a (usually Markov) environment to maximize a performance metric called the return. 
At each step, the agent observes its state, chooses an action, and receives a reward and a new state. 
Through repeated interactions, it refines its policy to balance exploration (testing unknown actions) with exploitation (using known successful actions), ultimately maximizing the expected return.
Ref. \cite{yamagami2023bandit} presents previous results of how QWs can be used for decision making, in the simplest general framework of the ``multi-armed bandit."

Each time the agent has to make a decision, it starts a quantum walk from some defined initial position. 
It lets the quantum walk evolve under the time evolution operator $W_1$, until a predefined end time.
Then, it measures the resulting position of the quantum walker.
Each spatial position $x$ is associated with an action, with the observed outcome being chosen as the action for that agent in this round.
Then, for the next round, the process repeats (after potentially adjusting the coin operators to incorporate new knowledge about the reward environment).
Thus, we can utilize the inherent uncertainty of the QW as the basis for exploration, while fine-tuning of the coin matrices can be used to achieve exploitation, e.g., by achieving localization.

When such decision-making problems contain multiple players, avoiding decision conflicts often becomes a vital aspect.
When resources are limited and options cannot be shared, players need to not only explore and exploit to maximize their rewards, but also strive to avoid choosing the same option at the same time.
Quantum walks offer a novel way to implement this coordination through the fundamental properties of their time evolution. 

There are several ways to extend decision making using QWs to multiple players. 
Fundamentally, we will let each player have their own ``quantum walker," whose measurement outcome determines that players action. 
Thus, if player one observes position $\ket{x}$ and player two observers $\ket{y}$, our goal is to minimize or ideally completely avoid the cases of $x=y$ in each individual round. 
At the same time, we would like to preserve each players freedom of choice when considered over all rounds, i.e., each player should be able to reach any position and thus able to choose any action in principle.

We will show how QWs can be used to solve the coordination problem of multi-player decision making and discuss the advantages and disadvantages of various approaches in the rest of this paper.

\section{Decision making of two players on separate 1D lattices}

\subsection{Entanglement of the initial state}
\label{section:initialstate}

The most straightforward way to extend decision making to multiple players is by simply using tensor products of QWs on the 1D lattice.
For example, two players' decision making can be modeled by the tensor product of two QWs.
Let player A and player B each have their own QW on a separate 1D lattices.
Their collective decision making can be $(x,\,y)$ when player A's quantum walker is observed at node $x$ and player B's at node $y$.
The entire system can be represented by the tensor product of two separate QWs on 1D lattices.
However, introducing entanglement of these two QWs is necessary to avoid conflicts.
As the simplest approach, we will explore the properties of the entanglement of the initial state and ability to influence collective decision making.

Denoting the states of QWs for players A and B on the 1D lattice at time $t$ as $\Phi_A^{(t)}$ and $\Phi_B^{(t)}$, respectively, their time evolution can be described by
\begin{equation}
    \Phi_A^{(t)} = W_A^t \Phi_A^{(0)}, ~\Phi_B^{(t)} = W_B^t \Phi_B^{(0)},
\end{equation}
respectively.
Here $W_A$ and $W_B$ are the time evolution matrices of QWs of players A and B, respectively, and time is represented by $t$:
\begin{equation}
    \begin{split}
        W_A = \sum_{x} \bigl(\ket{x-1}\bra{x} \otimes \ket{L}_x\bra{L}_x C_A(x) + \\
        \ket{x+1}\bra{x} \otimes \ket{R}_x\bra{R}_x C_A(x) \bigr),
    \end{split}
\end{equation}
\begin{equation}
    \begin{split}
        W_B = \sum_{y} \bigl(\ket{y-1}\bra{y} \otimes \ket{L}_y\bra{L}_y C_B(y) + \\
        \ket{y+1}\bra{y} \otimes \ket{R}_y\bra{R}_y C_B(y) \bigr).
    \end{split}
\end{equation}
Player A operates on spatial positions $\ket{x}$ and Player B on the separate 1D lattice $\ket{y}$.
The tensor product of these two QWs is represented by
\begin{align}
    \begin{split}
        \Phi_{all}^{(t)} &= \Phi_A^{(t)} \otimes \Phi_B^{(t)} = W_A^t \Phi_A^{(0)} \otimes W_B^t \Phi_B^{(0)} \\
        &=  (W_A \otimes W_B)^t (\Phi_A^{(0)} \otimes \Phi_B^{(0)}).
    \end{split}
\end{align}
The operator $W_A \otimes W_B$ can be regarded as the time evolution matrix of the entire state.

Then, consider entangling the initial state.
For example, a vector 
\begin{equation}
    \frac{1}{\sqrt{2}}(\Psi_1 \otimes \Psi_2 + \Psi_2 \otimes \Psi_1)
\end{equation}
is entangled.
When it is adopted as the initial state, the time evolution of the entire state is obtained as follows:
\begin{align}
    \begin{split}
        \Phi_{all}^{(t)} =\,& (W_A \otimes W_B)^t \frac{1}{\sqrt{2}}(\Psi_1 \otimes \Psi_2 + \Psi_2 \otimes \Psi_1) \\
        =\,& \frac{1}{\sqrt{2}} (W_A \otimes W_B)^t ~ \Psi_1 \otimes \Psi_2 \mathrel{+}\\
        &\frac{1}{\sqrt{2}} (W_A \otimes W_B)^t ~ \Psi_2 \otimes \Psi_1 \\
        =\,& \frac{1}{\sqrt{2}} \left( W_A^t \Psi_1 \otimes W_B^t \Psi_2 + W_A^t \Psi_2 \otimes W_B^t \Psi_1  \right).
    \end{split}
\end{align}
\subsection{Partial avoidance through fermion}
\label{sec:fermion}

The avoidance of selection conflicts is related to a fermion-like property of the joint probability amplitude.
Let $\Phi_{all}$ be the probability amplitude of the joint system. 
If it meets the condition of
\begin{equation}
    \Phi_{all}(x,\,y) = -\Phi_{all}(y,\,x),
    \label{eq:fermi_node}
\end{equation}
then the conflict is completely avoided because $\Phi_{all}(x,\,x) = 0.$
Eq.~\eqref{eq:fermi_node} is realized under a certain time evolution matrix $W$.
The following theorem describes the condition that $W$ must satisfy to realize Eq.~\eqref{eq:fermi_node}.

\begin{theorem}
    \label{the:fermion}
    Let $W$ be the time evolution matrix of the QW and $U_\sigma$ be a swap operator: $(U_\sigma \Phi_{all})(x,\,y)=\Phi_{all}(y,\,x)$.
    Then the following holds.
    \begin{equation}
    \begin{split}
        &W = U_\sigma W U_\sigma^{-1}, ~U_\sigma\Phi_{all}^{(0)} = -\Phi_{all}^{(0)} \\
        &\Leftrightarrow~ U_\sigma \Phi_{all}^{(t)} = -\Phi_{all}^{(t)}, \\
        &\hspace{20pt} \text{i.e., }~ \Phi_{all}^{(t)}(x,\,y) = -\Phi_{all}^{(t)}(y,\,x),~^\forall t \ge 0.
    \end{split}
    \end{equation}
\end{theorem}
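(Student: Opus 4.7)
The plan is to decompose the biconditional into its two implications and, in both directions, to rely on the QW evolution rule $\Phi_{all}^{(t+1)} = W\,\Phi_{all}^{(t)}$ together with the relation $W = U_\sigma W U_\sigma^{-1} \Leftrightarrow U_\sigma W = W U_\sigma$.

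For the forward direction ($\Rightarrow$), I would induct on $t$. The base case $t=0$ is exactly the second hypothesis $U_\sigma \Phi_{all}^{(0)} = -\Phi_{all}^{(0)}$. For the inductive step, assuming $U_\sigma \Phi_{all}^{(t)} = -\Phi_{all}^{(t)}$, I write
\[
U_\sigma \Phi_{all}^{(t+1)} = U_\sigma W \Phi_{all}^{(t)} = W U_\sigma \Phi_{all}^{(t)} = -W \Phi_{all}^{(t)} = -\Phi_{all}^{(t+1)},
\]
where the second equality uses the commutation of $W$ with $U_\sigma$. This closes the induction and gives antisymmetry for every $t\ge 0$.

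For the reverse direction ($\Leftarrow$), the initial antisymmetry is obtained immediately by evaluating the hypothesis at $t=0$. To recover the commutation, I compare two expressions for $U_\sigma \Phi_{all}^{(t+1)}$: directly it equals $U_\sigma W \Phi_{all}^{(t)}$, while using the assumed antisymmetry at $t+1$ and then unrolling one step it equals $-\Phi_{all}^{(t+1)} = -W\Phi_{all}^{(t)} = W U_\sigma \Phi_{all}^{(t)}$. Equating the two yields $(U_\sigma W - W U_\sigma)\Phi_{all}^{(t)} = 0$ on every state visited along the trajectory.

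The main obstacle — indeed the only delicate point — is that this last identity gives commutation only on the orbit $\{\Phi_{all}^{(t)}\}_{t\ge 0}$, not as an operator identity on the full Hilbert space. I would therefore either (a) read the theorem as an equivalence on the antisymmetric sector and state the commutation as holding on that invariant subspace, or (b) strengthen the $\Leftarrow$ hypothesis to require the antisymmetry property for every antisymmetric initial state, in which case the orbits together span the antisymmetric subspace and $U_\sigma W = W U_\sigma$ is lifted to an honest operator identity on that subspace. This is the interpretive step I would flag explicitly in the write-up, since the rest of the argument is a direct one-line induction.
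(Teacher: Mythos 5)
Your proof takes essentially the same route as the paper: a one-line induction using $U_\sigma W = W U_\sigma$ to pass from the hypotheses on $W$ and $\Phi_{all}^{(0)}$ to antisymmetry at all times, and a comparison of $U_\sigma W \Phi_{all}^{(T)}$ with $W U_\sigma \Phi_{all}^{(T)}$ for the converse. The delicate point you flag is real and is in fact glossed over in the paper, which passes from the identity $U_\sigma W \Phi_{all}^{(T)} = W U_\sigma \Phi_{all}^{(T)}$ holding along a single trajectory directly to the operator identity $U_\sigma W = W U_\sigma$ without the additional quantification over initial states (or restriction to an invariant subspace) that you correctly note is needed to make that step rigorous.
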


Therefore, based on this theorem, if a $W$ satisfying 
\begin{equation}
\begin{split}
    &W = U_\sigma W U_\sigma^{-1}, \\
    &\text{i.e.,}~ (W\Phi_{all})(x,\,y) = (U_\sigma W U_\sigma^{-1}\Phi_{all})(x,\,y) 
    \label{eq:w_condition}
\end{split}
\end{equation}
could be found, the joint probability amplitude has fermionic properties (see Eq.~\eqref{eq:fermi_node}), and complete conflict avoidance is possible because of the Theorem~\ref{the:fermion}.
However, it can be shown that such unitary matrix $W$ cannot exist, as we prove in the supplementary material.

Therefore, we need to consider the fermion of nodes and inner states, i.e., while we may not prevent both walkers from being at the same spatial location $x=y$, we will at least try to prevent them from being in the same exact location and inner state.
Let the inner states at positions $x$ and $y$ be represented by $\mu$ and $\nu \in \{L,\,R\}$, respectively.
The condition that the joint probability amplitude is fermionic in nodes and inner states requires that
\begin{equation}
    \Phi((x,\,\mu),(y,\,\nu)) = -\Phi((y,\,\nu),(x,\,\mu)),
    \label{eq:fermi}
\end{equation}
which leads to $\Phi((x,\,\mu), (x,\,\mu)) = 0.$
This implies that the probability amplitudes of the pairs of the same nodes and inner states are zero, which is a path towards partial conflict avoidance.

\begin{theorem}
    When the initial state is the entangled vector 
    \begin{equation}
        \Psi_1 \otimes \Psi_2 - \Psi_2 \otimes \Psi_1,
        \label{eq:ini}
    \end{equation}
    and the two operators $W_A$ and $W_B$ are commutative for the tensor product $W_A \otimes W_B = W_B \otimes W_A$, then the state vector at any time satisfies Eq.~\eqref{eq:fermi}.
\end{theorem}

\begin{proof}
    First, the initial state Eq.~\eqref{eq:ini} satisfies Eq.~\eqref{eq:fermi} because 
    \begin{equation}
    \begin{split}
        &(\Psi_1 \otimes \Psi_2 - \Psi_2 \otimes \Psi_1)((x,\,\mu),(y,\,\nu)) \\
        &= \Psi_1(x,\,\mu) \otimes \Psi_2(y,\,\nu) - \Psi_2(x,\,\mu) \otimes \Psi_1(y,\,\nu) \\
        &= - \left( \Psi_2(x,\,\mu) \otimes \Psi_1(y,\,\nu) - \Psi_1(x,\,\mu) \otimes \Psi_2(y,\,\nu) \right) \\
        &= - (\Psi_1 \otimes \Psi_2 - \Psi_2 \otimes \Psi_1)((y,\,\nu), (x,\,\mu)).
    \end{split}
    \end{equation}
    Second, at time $t$, because $W_A \otimes W_B = W_B \otimes W_A$,
    \begin{equation}
        \begin{split}
            &(W_A \otimes W_B)^t (\Psi_1 \otimes \Psi_2 - \Psi_2 \otimes \Psi_1)\bigr)((x,\,\mu),(y,\,\nu))\\
            &= - \bigl( W_B^t \Psi_2  (x,\,\mu) \otimes W_A^t \Psi_1 (y,\,\nu) \\
            &\hspace{24pt} - W_B^t \Psi_1  (x,\,\mu) \otimes W_A^t \Psi_2 (y,\,\nu) \bigr) \\
            &= - \bigl((W_A \otimes W_B)^t (\Psi_1 \otimes \Psi_2 - \Psi_2 \otimes \Psi_1)\bigr)((y,\,\nu),(x,\,\mu)).
        \end{split}
    \end{equation}
    Therefore, the state vector satisfies Eq.~\eqref{eq:fermi} at any time.
\end{proof}

\begin{figure}[b]
   \centering
    \includegraphics[width=0.45\textwidth]{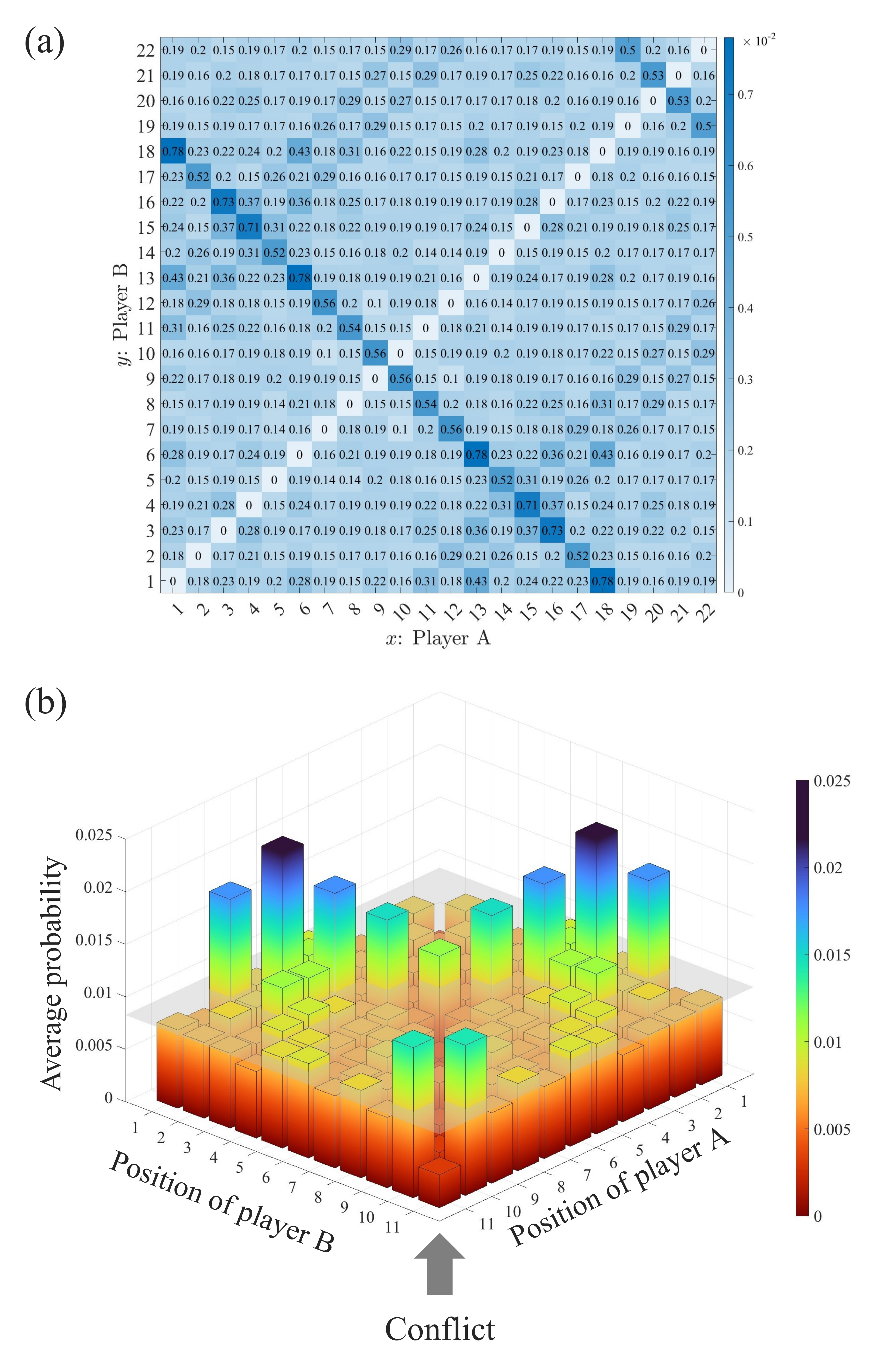}
   \caption{Numerical simulation of partial avoidance through fermion of nodes. (a) Average probability distribution of inner states. The pairs of the same node and the same inner state, such as $(1L, ~1L)$, have zero probability amplitude by realizing the state shown in Eq.~\eqref{eq:fermi}. (b) Average probability distribution of nodes. The probability of observing the walker at conflict nodes, i.e., conflict probability, is not necessarily smaller than the uniform distribution.}
   \label{fig:spsi}
\end{figure}

For example, when $\Phi_1$ and $\Phi_2$ have two elements, i.e., the cycles have only one node, Eq.~\eqref{eq:ini} becomes
\begin{equation}
    \left[ \begin{array}{c} a \\ b \end{array} \right] \otimes \left[ \begin{array}{c} c \\ d \end{array} \right] - \left[ \begin{array}{c} c \\ d \end{array} \right] \otimes \left[ \begin{array}{c} a \\ b \end{array} \right] = \left[ \begin{array}{c} 0 \\ ad-bc \\ bc-ad \\ 0 \end{array} \right].
\end{equation}
This implies that the probability amplitudes of $((1,\,L), (1,\,L))$ and $((1,\,R), (1,\,R))$ are zero.
Similarly, when $\Phi_1$ and $\Phi_2$ have four elements, i.e., the cycles have two nodes, Eq.~\eqref{eq:ini} becomes
\begin{equation}
\begin{split}
    &\left[ \begin{array}{c} a_1 \\ b_1 \\ a_2 \\ b_2 \end{array} \right] \otimes 
    \left[ \begin{array}{c} c_1 \\ d_1 \\ c_2 \\ d_2 \end{array} \right] - 
    \left[ \begin{array}{c} c_1 \\ d_1 \\ c_2 \\ d_2 \end{array} \right] \otimes 
    \left[ \begin{array}{c} a_1 \\ b_1 \\ a_2 \\ b_2 \end{array} \right] \\
    &= 
    [ \begin{array}{ccccc} 0 & a_1d_1-b_1c_1 & a_1c_2 - a_2c_1 & a_1d_2-b_2c_1   \end{array}  \\
    &\hspace{15pt} \begin{array}{cccc}  b_1c_1-a_1d_1 & 0 & b_1c_2-a_2d_1 & b_1d_2-b_2d_1   \end{array}  \\
    &\hspace{15pt} \begin{array}{cccc} a_2c_1-a_1c_2 & a_2d_1-b_1c_2 & 0 & a_2d_2-b_2c_2    \end{array} \\
    &\hspace{15pt} \begin{array}{cccc}  b_2c_1-a_1d_2 & b_2d_1-b_1d_2 & b_2c_2-a_2d_2 & 0  \end{array}]^\top.
\end{split}
\end{equation}
This means the probability amplitudes of $((1,\,L), (1,\,L))$, $((1,\,R), (1,\,R))$, $((2,\,L), (2,\,L))$ and $((2,\,R), (2,\,R))$ are zero. 
Therefore, the pairs of the same node and the same inner states have zero probability amplitudes in the vectors which have the same form as Eq.~\eqref{eq:ini}.

The easiest way to realize the commutative condition of $W_A$ and $W_B$ is to make them equal.
Therefore, the numerical simulation below adopts $W_A=W_B=W$.

Figures~\ref{fig:spsi}(a) and (b) show an example for partial conflict avoidance through fermion, where the initial state is
\begin{equation}
\begin{split}
    &\Phi_{all}^{(0)} = \frac{1}{\sqrt{2}}\left( \Psi_A\otimes \Psi_B - \Psi_B \otimes \Psi_A  \right), \\
    &\Psi_A = \ket{2}\otimes\frac{1}{\sqrt{2}} \begin{vctr}1\\ i\end{vctr}, ~\Psi_B = \ket{8}\otimes\frac{1}{\sqrt{2}} \begin{vctr}i\\ 1\end{vctr}, 
    \label{eq:init_fermi}
\end{split}
\end{equation}
Here, $\ket{x}$ is formalized by a column vector with $N$ elements whose $x$-th one is $1$ and others are $0$.
Here, $e_x$ represents a column vector with $N$ elements whose $x$-th one is $1$ and others are $0$.
Each node on the 1D cycle has a $2 \times 2$ matrix coin
\begin{equation}
    \frac{1}{\sqrt{2}}\left[\begin{array}{cc}
        1 & 1 \\
        -1 & 1
    \end{array} \right] .
\end{equation}
The average probability distribution for all possible combinations of nodes and inner states for $N=11$ is displayed in Fig.~\ref{fig:spsi}(a).
For $x=1,\,2,\,\cdots,\,11$, the $(2x-1)$-th and $2x$-th rows and columns  in Fig.~\ref{fig:spsi}(a) correspond to $\ket{x,\,L}$ and $\ket{x,\,R}$ for players A and B, respectively.
Successfully, all of the right-down diagonal elements that correspond to the pairs of the same node and inner state are zero.
However, the two quantum walkers can still be found in the same position, as long as they do not share the same inner state, e.g., $\Phi((x,\,L), (x,\,R)) \neq 0.$

The 3D bar graph, shown in Fig.~\ref{fig:spsi}(b), illustrates this problem. It displays the average probability for every possible combination of positions. 
This is the supposed measurement outcome used in the decision making of each agent.
Crucially, the probabilities at conflict nodes $(1,\,1),~(2,\,2),\cdots (11,\,11)$ (shown by bars along the arrow in Fig.~\ref{fig:spsi}(b)) are non-zero. 
The conflict nodes, on average, have a lower average probability, indicating that conflicts are partially averted.
However, in places the average probability can reach the height of the uniform distribution displayed by the gray plane. 
Further tuning of the coin matrices could likely improve this result, but it can never reach full conflict avoidance, due to the impossibility of a unitary operator $W$ fulfilling Theorem~\ref{the:fermion} (proven in the supplementary material).

\subsection{Partial avoidance through fermion and mirrored state}

We extend the method of the previous subsection by utilizing ``mirrored states."

The definition of a mirrored state is introduced first, before the detailed explanation of its effect on the quantum walk probability distribution.
Let a vector $\Phi$ represent the QW state on a 1D cycle with three nodes for example.
The vector $\Phi$ has six probability amplitudes because there are three nodes and two inner states
\begin{equation}
    \Phi = [a_1,~b_1,~a_2,~b_2,~a_3,~b_3]^\top.
\end{equation}
Probability amplitudes $a_1$ and $b_1$ correspond to $\ket{1,\,L}$ and $\ket{1,\,R}$, respectively, $a_2$ and $b_2$ correspond to $\ket{2,\,L}$ and $\ket{2,\,R}$, respectively, and $a_3$ and $b_3$ correspond to $\ket{3,\,L}$ and $\ket{3,\,R}$, respectively.
The mirrored state $\Hat{\Phi}$ has probability amplitudes mirrored around a certain position $x^*$ from $\Phi$.
For example, if $x^*=2$, then the mirrored state is
\begin{equation}
    \Hat{\Phi} = [b_3,~a_3,~b_2,~a_2,~b_1,~a_1]^\top.
\end{equation}

The following state consisting of mirror states makes the central position $x^*$ conflict-free:
\begin{equation}
    \begin{split}
        \Phi_{all}^{(t)} = &\Phi_A^{(t)} \otimes \Phi_B^{(t)} - \Phi_B^{(t)} \otimes \Phi_A^{(t)} + \\
        &\Hat{\Phi}_A^{(t)} \otimes \Hat{\Phi}_B^{(t)} - \Hat{\Phi}_B^{(t)} \otimes \Hat{\Phi}_A^{(t)}.
    \end{split}
\end{equation}
To confirm, the inner states at position $x^*$ is
\begin{align}\begin{split}
    &\Phi_{all}^{(t)}(x^*,\,x^*)  \\
    &= \Phi_A(x^*)\otimes \Phi_B(x^*) - \Phi_B(x^*) \otimes \Phi_A(x^*) \\
    &\hspace{10pt}+ \Hat{\Phi}_A(x^*) \otimes \Hat{\Phi}_B(x^*) - \Hat{\Phi}_B(x^*) \otimes \Hat{\Phi}_A(x^*) \\
    &= \begin{vctr}a\\b \end{vctr} \otimes \begin{vctr}c\\d \end{vctr} - \begin{vctr}c\\d \end{vctr} \otimes \begin{vctr}a\\b \end{vctr} \\
    &\hspace{10pt}+ \begin{vctr}b\\a \end{vctr} \otimes \begin{vctr}d\\c \end{vctr} - \begin{vctr}d\\c \end{vctr} \otimes \begin{vctr}b\\a \end{vctr} = \begin{vctr}0\\0\\0\\0 \end{vctr}.
    \end{split}
    \label{eq:mirrored}
\end{align}
Therefore, perfect conflict avoidance covering all conflict nodes is impossible, but avoiding one position is feasible by the initial state entanglement.

\begin{figure}[t]
   \centering
    \includegraphics[width=0.45\textwidth]{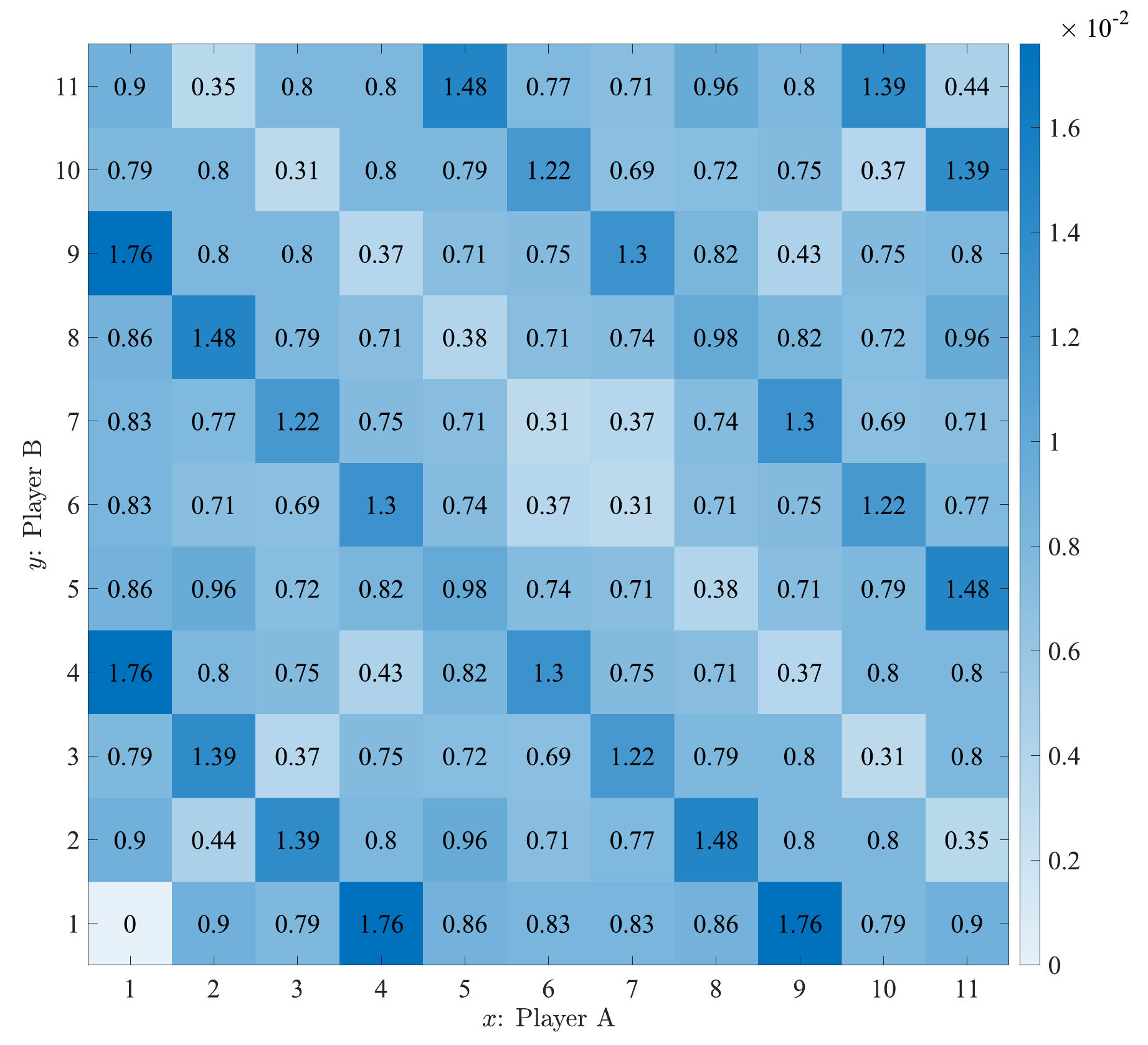}
   \caption{Average probability obtained for partial avoidance through fermion and mirrored state. The center of the mirrored state $x^*$ is at position $x=y=1$. The conflict probability at node $(1, ~1)$ is resultingly zero, and therefore, choice $1$ is made conflict-free as intended (See Eq.~\eqref{eq:mirrored}).}
   \label{fig:one}
\end{figure}

The simulation results are shown in Fig.~\ref{fig:one}, where $x^*=1$, $N=11$.
The initial state is
\begin{equation}
\begin{split}
    &\Phi_{all}^{(0)} = \frac{1}{2} (a_0 \otimes b_0 - b_0 \otimes a_0  + \Hat{a}_0 \otimes \Hat{b}_0 - \Hat{b}_0 \otimes \Hat{a}_0), \\
    &a_0 = \ket{2} \otimes\frac{1}{\sqrt{2}}\begin{vctr}1\\ i\end{vctr},~b_0 = \ket{8} \otimes\frac{1}{\sqrt{2}}\begin{vctr}i\\ 1\end{vctr}, \\
    &\Hat{a}_0 = \ket{N}\otimes\frac{1}{\sqrt{2}}\begin{vctr}i\\ 1\end{vctr},~~\Hat{b}_0 = \ket{5}\otimes\frac{1}{\sqrt{2}}\begin{vctr}1\\ i\end{vctr}.
\end{split}
\end{equation}
The coin of each node on the 1D cycle is
\begin{equation}
    \frac{1}{2}\left[
    \begin{array}{cc}
        1+i & 1-i \\
        1-i & 1+i
    \end{array}
    \right].
\end{equation}
The average probability distribution for positions is shown in Fig.~\ref{fig:one}, and the central position $(1,\,1)$ is conflict-free as intended. 
Thus, by engineering our initial state to have certain spatial symmetries, we can avoid conflicts at one spatial position (but not all at the same time).

Overall, the entanglement-based approach for the initial states has some advantages, i.e., it is conceptually a straightforward extensions of 1D QW decision making and it requires only a careful preparation of the initial state, but is otherwise quite liberal for the choice of time-evolution operators $W$.
However, as we have shown, complete conflict avoidance is impossible in this framework. 
To obtain conflict-free decision making for two and three players, we need to entangle the time-evolution operators, i.e., the coin operators of the system. 
This yields a new kind of topology for the QW and is analyzed in the rest of this paper.

\section{decision making of two players by QW on a 2D lattice or torus}

The initial state's entanglement has the potential to either make at one option free from conflicts, or do reduce the conflict rate by engineering the probability amplitudes to be zero for the same location and inner state.
Ideally, we want to make all options conflict-free. 
A natural way to move beyond merely entangling two one-dimensional QWs is to merge the two spatial dimensions into a single two-dimensional lattice and treat the entire walk as one four-dimensional coin system. 
In other words, rather than having separate 1D walkers whose initial states may be entangled, we now give a single walker a 2D position space and a 4D coin space.
This allows the coin operator itself to become intrinsically entangled—no longer decomposable into separate factors for each dimension—thus providing a richer framework for controlling the walk and, in our setting, enabling more powerful mechanisms to avoid conflicts in multi-player decision-making.

\subsection{Theory of QWs on a 2D lattice or torus}

We first introduce the general theory of QWs on a 2D lattice. 
Here, four inner states are taken into consideration: $\ket{L}_x\ket{L}_y$, $\ket{L}_x\ket{R}_y$, $\ket{R}_x\ket{L}_y$, and $\ket{R}_x\ket{R}_y$.
The coin $C(x,\,y)$ at node $(x,\,y)$ is a $4 \times 4$ matrix.
The coin acting on the whole state is
\begin{equation}
    C_2 =\sum_x \sum_y \ket{x,\,y}\bra{x,\,y} \otimes C(x,\,y). 
    \label{eq:2d_coin}
\end{equation}
The shift operator is
\begin{equation}\begin{split}
    S_2 = &\sum_x \sum_y \bigl( \ket{x-1,\,y-1}\bra{x,\,y}\otimes \ket{L}_x\ket{L}_y\bra{L}_x\bra{L}_y \\
    &+ \ket{x-1,\,y+1}\bra{x,\,y} \otimes \ket{L}_x\ket{R}_y\bra{L}_x\bra{R}_y \\
    &+ \ket{x+1,\,y-1}\bra{x,\,y} \otimes \ket{R}_x\ket{L}_y\bra{R}_x\bra{L}_y \\
    &+ \ket{x+1,\,y+1}\bra{x,\,y} \otimes \ket{R}_x\ket{R}_y \bra{R}_x\bra{R}_y \bigr),
\end{split}\end{equation}
and the time evolution operator of the QW is 
\begin{align}
    \begin{split}
        &W_2 = S_2 \cdot C_2 \\
    &= \sum_{x,~y} \bigl( \ket{x-1,\,y-1}\bra{x,\,y}\otimes \ket{L}_x\ket{L}_y\bra{L}_x\bra{L}_yC(x,\,y)  \\
    &+ \ket{x-1,\,y+1}\bra{x,\,y} \otimes \ket{L}_x\ket{R}_y\bra{L}_x\bra{R}_y C(x,\,y) \\
    &+ \ket{x+1,\,y-1}\bra{x,\,y} \otimes \ket{R}_x\ket{L}_y\bra{R}_x\bra{L}_y C(x,\,y) \\
    &+ \ket{x+1,\,y+1}\bra{x,\,y} \otimes \ket{R}_x\ket{R}_y \bra{R}_x\bra{R}_y C(x,\,y) \bigr).
    \end{split}
\end{align} 

A QW on a 2D torus can be defined by adjusting the shift at the endpoints appropriately, like in the 1D case.
The state vector $\Phi$ of the QW can be spanned by the tensor product of a position vector and an inner state vector $\{\ket{1,\,1},\,\ket{1,\,2},\,\cdots,\,\ket{N,\,N}\}\otimes\{\ket{LL},\ket{LR},\ket{RL},\ket{RR}\}$.

\subsection{Relating QWs on a 2D lattice or torus to decision making}

The main aim of this paper is to present a new method for avoiding conflicts utilizing QWs.
Unlike in the case of two separate 1D walkers, we are now considering a single walker on 2D plane.
This quantum walker walks on a 2D torus, where each node has an associated $2^2\times 2^2=4 \times 4$ unitary matrix coin.
For each decision to be made by both agents, they collectively start one joint walker, who evolves as a 2D QW for a certain time.
Then, when the walker is observed at node $(i,\,j)$, the players A and B choose options $i$ and $j$, respectively.
Our goal remains to investigates whether it is possible to avoid decision conflicts using QWs.
Because players' decisions correspond to nodes where the quantum walker is observed, the conflict avoidance question now maps to the problem of preventing the quantum walker being observed at nodes $(1,\,1),\,(2,\,2),\,\cdots,\,(N,\,N)$.
Therefore, in the 2D lattice, we shall call the nodes $(1,\,1),\,(2,\,2),\,\cdots,\,(N,\,N)$ the ``conflict nodes," and the nodes that connect to the conflict nodes in the subsequent time step the ``border nodes."
Figure~\ref{fig:decision} shows our basic setup for the 2D case.

\begin{figure}[t]
   \centering
    \includegraphics[width=0.45\textwidth]{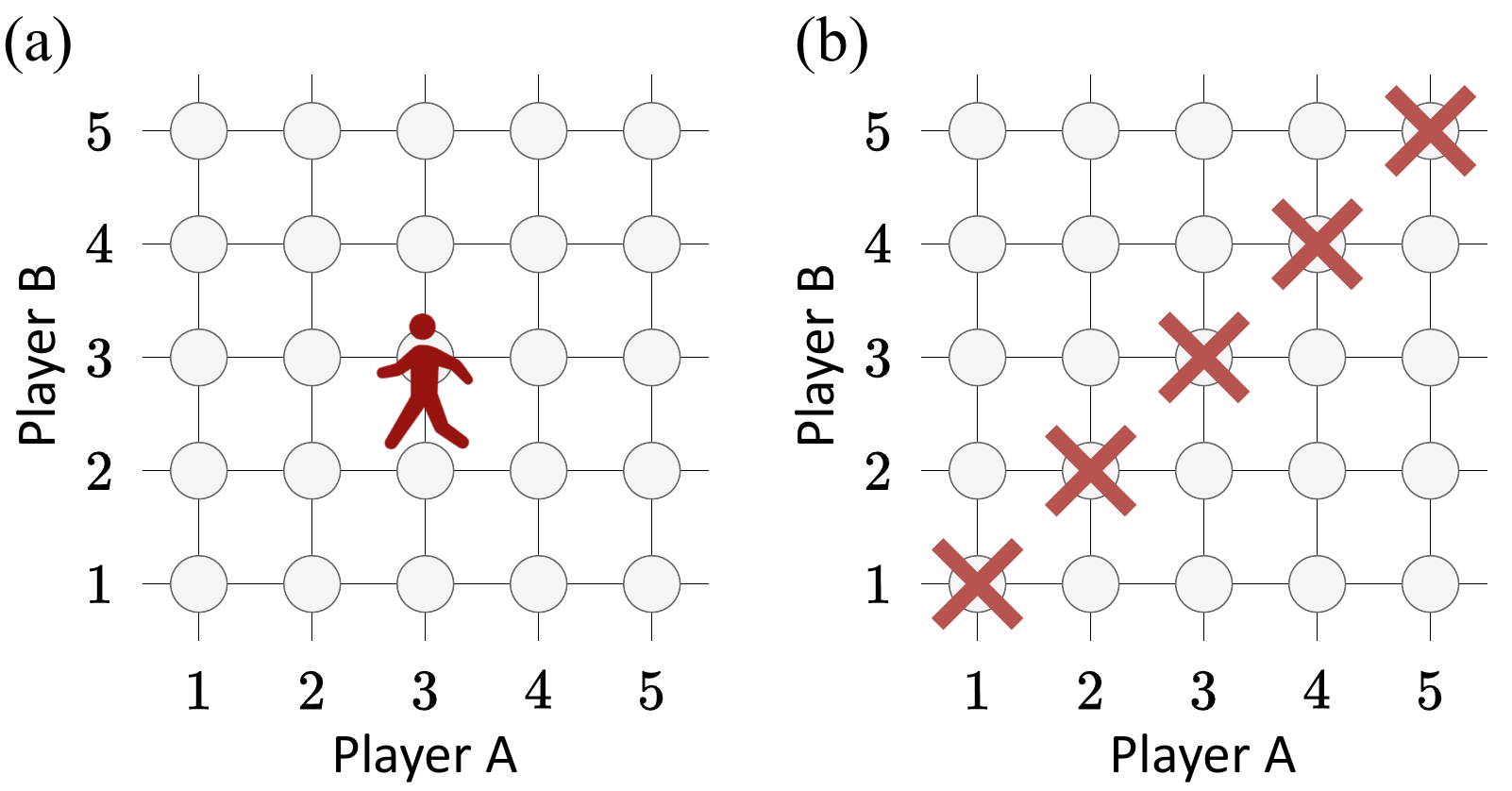}
   \caption{The quantum walker walks on a 2D lattice. The node where the walker is observed is considered as the collective decision the players make. (a) For example, players A and B choose options $3$ and $2$ when the walker is observed at node $(3,\,2)$. (b) The decision conflict correspond to cases where the walker is observed at nodes $(1,\,1),\,(2,\,2),\,\cdots, (N,\,N)$. The purpose of this paper is to avoid such cases.}
   \label{fig:decision}
\end{figure}


\begin{figure}[t]
   \centering
    \includegraphics[width=0.4\textwidth]{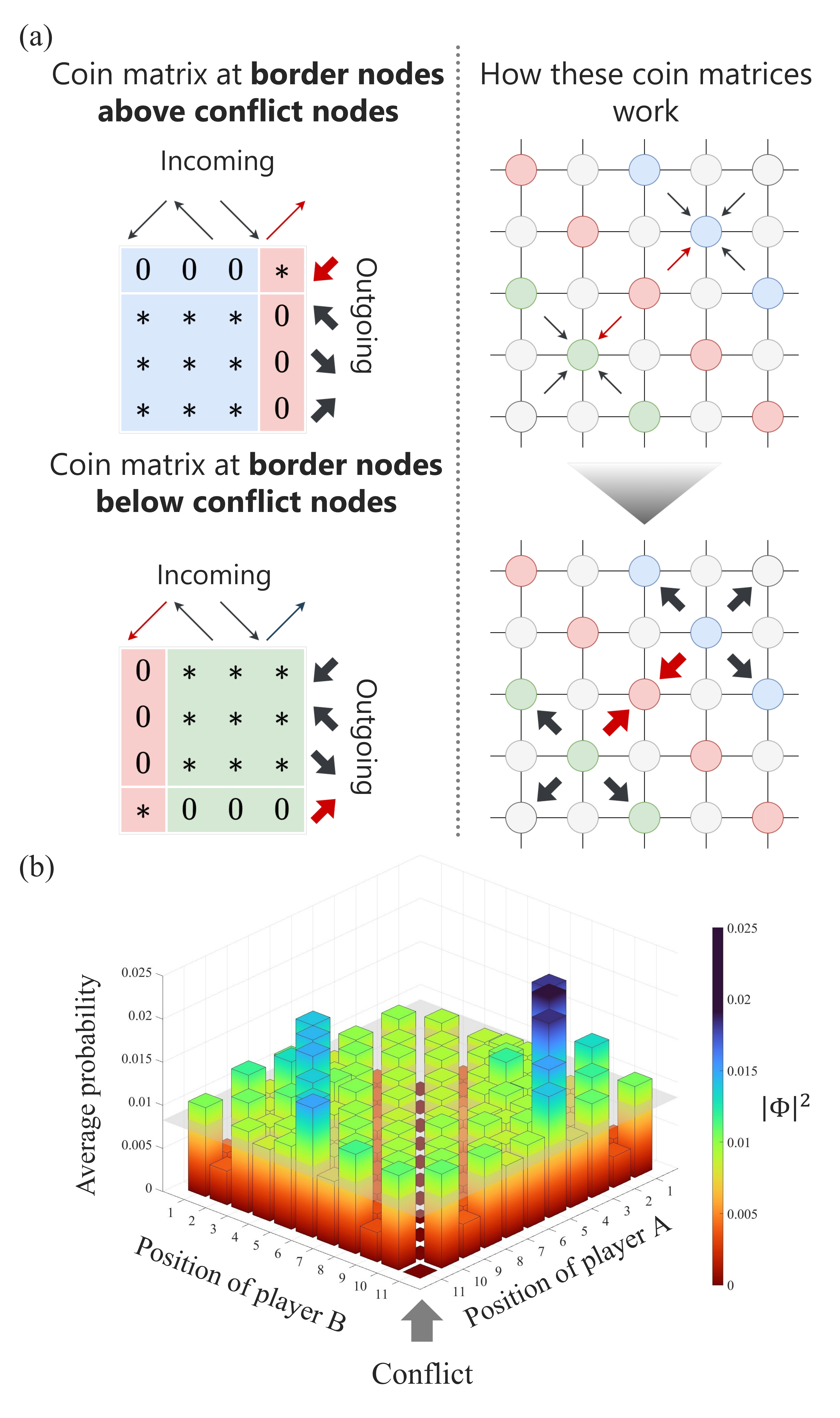}
   \caption{(a) Sketch of the coin operators (see Eq.~\eqref{eq:2d_coin}) to avoid decision conflicts for 2-player decision-making. Coin operators $C(x,\,y)$ are represented as a $4 \times 4$ matrix. Each node $(x,\,y)$ has four inner states, shown by arrows for indicating incoming and outgoing probability amplitudes. Coins at border nodes are entangled in such a way that probability amplitude flows from conflict nodes are isolated and reflected. (b) Average probability distribution when border nodes have entangled coins shown in (a). The conflict nodes are never reached by the walker as intended. Concrete coins used in the simulation are shown in Eqs.~\eqref{eq:coinabove}--\eqref{eq:coinother}.}
   \label{fig:reflect}
\end{figure}

\subsection{Entanglement of the coins: reflection}
\label{section:reflection}

We will now employ the additional freedom of entangled coins for conflict avoidance.
Thus, we need to assign to each node $(x,\,y)$ its $4\times 4$ coin matrix, where $x$ and $y$ denote both the position on the 2D lattice, as well as the actions later chosen by players A and B, respectively.
Since these coins no longer need to consist of tensor products of $2 \times 2$ coins, the term ``entanglement" is employed, and the entire system cannot be easily decomposed down into tensor product of two QWs on 1D lattices.

Complete conflict-avoidance is possible in this case. 
The main idea needed is to reflect the quantum walker at border nodes so that it can never reach conflict nodes in the first place. 

We visually represent how to solve the conflict avoidance problem in Fig.~\ref{fig:reflect}(a). 
Here, the coin operators are represented as $4\times4$ matrices, where elements marked as $*$ may take arbitrary values, under the constraint that $C(x,\,y)$ remains unitary. 
Figure~\ref{fig:reflect}(a) shows the coin matrices on the border nodes $(i,\,j)$ with $j = i \pm 2$.
Note that due to the walker having to move in both dimensions at every step, nodes at $(i,\,j)$ with $j=i\pm1$ are not directly connected to conflict nodes.

The coins change the inner states incoming from non-conflict nodes (shown by black arrows in Fig.~\ref{fig:reflect}(a)) and direct them back toward the non-conflict nodes.
This procedure prevents a quantum random walker on this 2D grid from ever reaching a conflict node, as long as the initial state was not a conflict node.
To meet the coin's unitary constraint, the inner states coming from conflict nodes (shown by red arrows in Fig.~\ref{fig:reflect}(a)) must be reflected back to those going to conflict nodes.
This also suggests that the walker is always observed at conflict nodes if it starts at conflict nodes.
In essence, by choosing the coin operators in this way, we are decomposing the graph into subgraphs from which a walker can never escape. 
This fact will become more relevant and important to study in the case of $3$ players discussed in Sec.~\ref{section:threeplayers}.

We numerically simulated this case.
The initial state of the simulation is $\ket{2,\,8} \otimes \frac{1}{2}[i,1,-1,i]^\top$, indicating that the walker starts from a position $(2,8)$ with the initial inner state $\frac{1}{2}(i\ket{LL} +\ket{LR} -\ket{RL} +i\ket{RR})$.
Note, that this initial position can be arbitrary as long as it is not a conflict node.

For $N=11$ options and $2$ players, the simulated average probability distribution of each node is displayed in Fig.~\ref{fig:reflect}(b).
In this numerical simulation, 
the coin matrix of the border nodes above conflict nodes (displayed by blue nodes in Fig.~\ref{fig:reflect}) is
\begin{equation}
    \left[ \begin{array}{cccc}
       0 & 0 & 0 & 1 \\
       -\frac{i}{\sqrt{2}}  & \frac{1}{2} & \frac{i}{2} & 0 \\
       \frac{1}{\sqrt{2}} & -\frac{i}{2} & \frac{1}{2} & 0 \\
       0 & \frac{1}{\sqrt{2}} & \frac{i}{\sqrt{2}} & 0
    \end{array} \right].
    \label{eq:coinabove}
\end{equation}
The coin matrix of the border nodes below conflict nodes (displayed by green nodes in Fig.~\ref{fig:reflect}) is
\begin{equation}
    \left[\begin{array}{cccc}
       0 & -\frac{i}{\sqrt{2}}  & \frac{1}{2} & \frac{i}{2} \\
       0 & \frac{1}{\sqrt{2}} & -\frac{i}{2} & \frac{1}{2} \\
       0 & 0 & \frac{1}{\sqrt{2}} & \frac{i}{\sqrt{2}}  \\
       1 & 0 & 0 & 0
    \end{array} \right].
\end{equation}
The coin matrix of the other nodes is
\begin{equation}
    \frac{1}{\sqrt{2}}\left[\begin{array}{cc}
        1 & 1 \\
        -1 & 1
    \end{array} \right] \otimes  \frac{1}{\sqrt{2}}\left[\begin{array}{cc}
        1 & 1 \\
        -1 & 1
    \end{array} \right].
    \label{eq:coinother}
\end{equation}
The conflict nodes on the line indicated by the arrow in Fig.~\ref{fig:reflect} have average probability of zero, showing that conflicts are completely avoided.
Because any walker that starts at a non-conflict node never goes to conflict nodes, we do not need to give conflict nodes special coins. 
Similarly, the choice of coins inside the bulk of non-border and non-conflict nodes can be engineered to suit the needs of the decision-making problem, while remaining conflict-free.
We have therefore shown via this example, that it is possible to use 2D quantum walks for avoiding decision-conflicts.
\label{section:2d}

\section{decision making of three players by QW on a 3D lattice or torus}
\begin{figure}[t]
    \centering
    \includegraphics[width=0.2\textwidth]{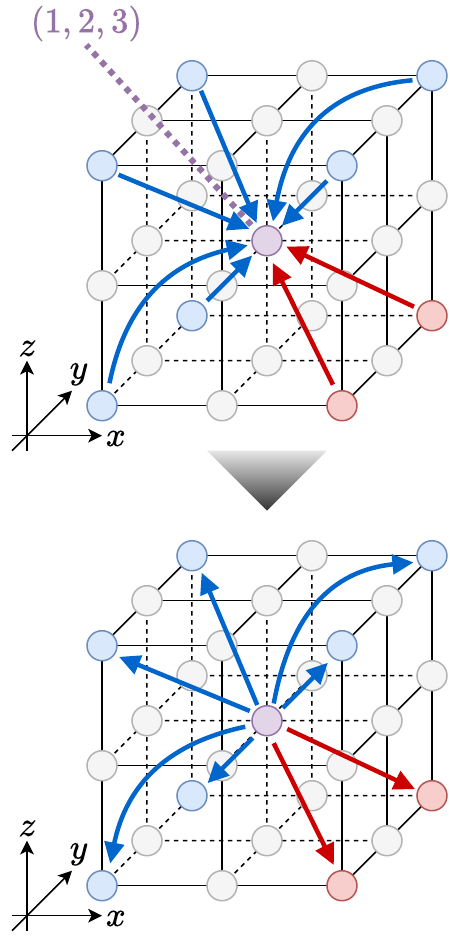}
    \caption{Inner states (a) before and (b) after being converted by the coin of node $(1,\,2,\,3)$. Blue arrows are from or to non-conflict nodes and red ones are from or to conflict nodes. Blue arrows are converted to blue ones and red ones are converted to red ones. This design makes conflict nodes have zero input when the walker starts at non-conflict nodes, and is realized by placing zeros properly in the coin (See Eq.~\eqref{eq:3p_coin_pattern}). 
    }
    \label{fig:reflection_image}
\end{figure}

\begin{figure}[t]
   \centering
    \includegraphics[width=0.45\textwidth]{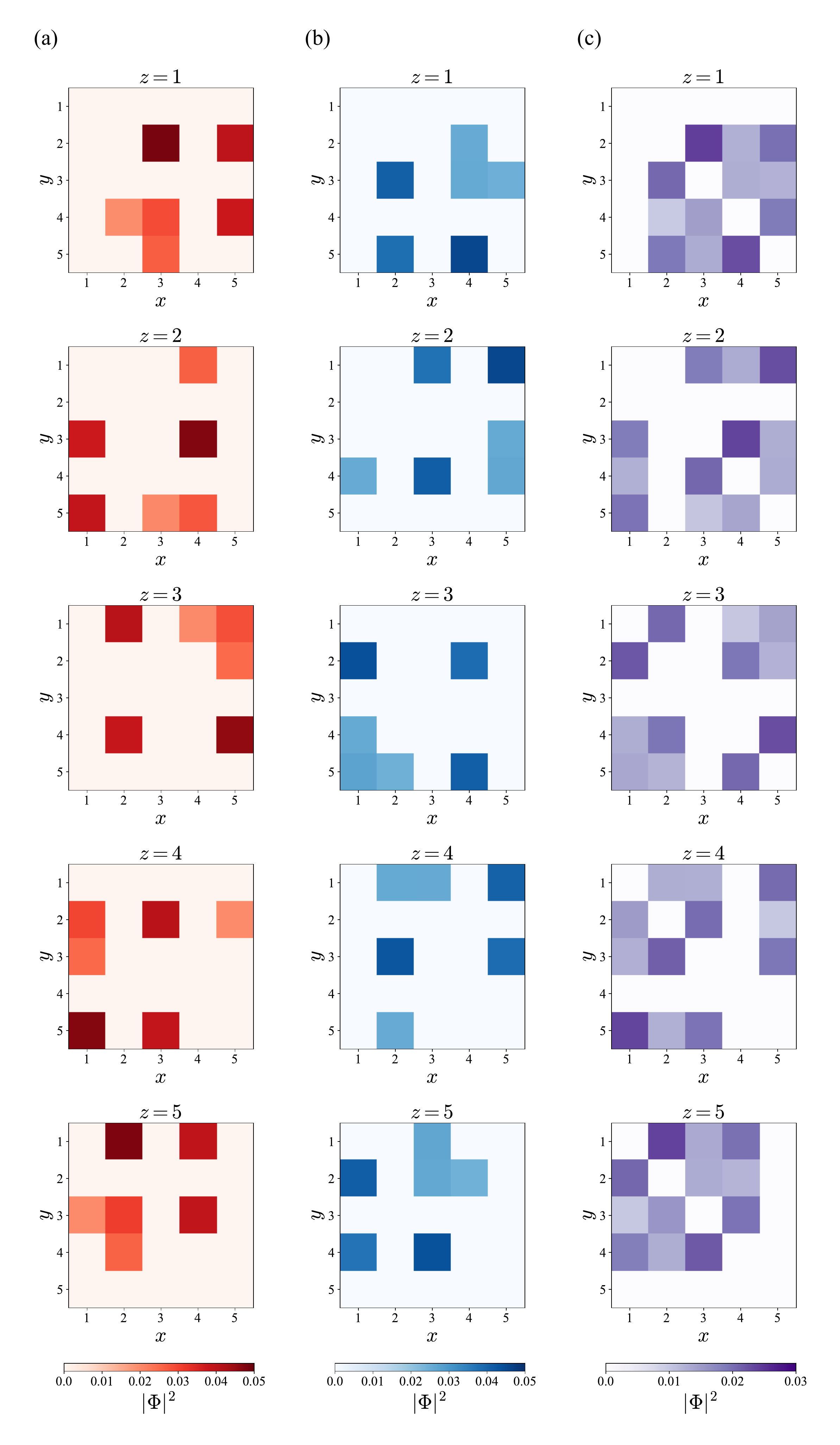}
   \caption{Average probability distribution after $200$ steps when the initial position is (a) $(1,\,2,\,3)$ and (b) $(3,\,2,\,1)$ when inner states are reflected at border nodes, as shown in Fig.~\ref{fig:reflection_image}.
   These colormaps are cross-sections with $z$ fixed, and they show probabilities at all nodes.
   Only half of the non-conflict nodes have non-zero average probabilities in (a) and (b) because nodes are divided into subnetworks.
   (c) Average probability distribution after $200$ steps when nodes $(1,\,2,\,3)$ and $(3,\,2,\,1)$ have probability amplitude of $1/\sqrt{2}$. All non-conflict nodes can be approached in this case thanks to the subnetwork analysis.}
   \label{fig:3p_avd}
\end{figure}


\subsection{First simulation for three players and its problem}
While the method introduced in the previous section can work for two players, there are additional complexities that only arise for larger player numbers. 
In the following, we generalize the method based on entangled coins introduced in Sec.~\ref{section:reflection} to three-player cases. 
Now, players A, B, and C choose options $i$, $j$, and $k$, respectively, derived from the measurement of a quantum walker moving on a 3D torus observed at node $(i,\,j,\,k)$.
Importantly, in the three-player case, the nature and amount of conflict nodes changes.
The decision conflict occurs not only when the walker is observed at nodes $(i,\,i,\,i)$ for $i=1,\,2,\,\cdots,\,N$ but also at nodes $(i,\,i,\,j)$ with permutation for $i\neq j$ and $i,\,j=1,\,2,\,\cdots,\,N$.
Therefore, avoiding decision conflicts in three-player cases is more difficult than in two-player cases.

In three-player cases, the coin of each node is a $2^3 \times 2^3 = 8 \times 8$ matrix.
The notation of the coin's rows and columns are $LLL$, $LLR$, $LRL$, $LRR$, $RLL$, $RLR$, $RRL$, and $RRR$.
 
As in section~\ref{section:reflection}, the quantum walker should be stopped at border nodes so that it cannot reach conflict nodes.
This is realized by correctly redirecting the walker.

As an example, let the number of options $N=5$, i.e. the number of nodes on the cycle, be five.
Then, the border node at $(1,\,2,\,3)$ is a border node, as the quantum walker can reach at least one conflict node within one step of the time evolution.
In particular, it can reach the conflict node at $(2,\,j,\,2)$.
To avoid conflicts, we will assigned the coin matrix
\begin{equation}
    \left[\begin{array}{cccccccc}
        * & 0 & * & 0 & * & * & * & * \\
        * & 0 & * & 0 & * & * & * & * \\
        * & 0 & * & 0 & * & * & * & * \\
        * & 0 & * & 0 & * & * & * & * \\
        0 & * & 0 & * & 0 & 0 & 0 & 0 \\
        * & 0 & * & 0 & * & * & * & * \\
        0 & * & 0 & * & 0 & 0 & 0 & 0 \\
        * & 0 & * & 0 & * & * & * & * 
    \end{array}\right],
    \label{eq:3p_coin_pattern}
\end{equation}
where $*$ means any complex number as long as the coin is unitary.

We illustrate the basic behavior in Fig.~\ref{fig:reflection_image}, now for the 3D case.
Inner states from non-conflict nodes (blue arrows in Fig.~\ref{fig:reflection_image}(a)) are converted to inner states which will go to non-conflict nodes in the following step (blue arrows in Fig.~\ref{fig:reflection_image}(b)).
Similarly, inner states from conflict nodes (red arrows in Fig.~\ref{fig:reflection_image}(a)) are turned into to inner states to conflict nodes (red arrows in Fig.~\ref{fig:reflection_image}(b)).
That is, the inputs to border nodes from non-conflict nodes are reflected to non-conflict nodes so that the walker cannot reach conflict nodes.

This coin matrix can be modified to have two diagonal blocks by permutation of rows and columns
\begin{equation}
    \left[\begin{array}{cccccccc}
        * & * & 0 & 0 & 0 & 0 & 0 & 0 \\
        * & * & 0 & 0 & 0 & 0 & 0 & 0 \\
        0 & 0 & * & * & * & * & * & * \\
        0 & 0 & * & * & * & * & * & * \\
        0 & 0 & * & * & * & * & * & * \\
        0 & 0 & * & * & * & * & * & * \\
        0 & 0 & * & * & * & * & * & * \\
        0 & 0 & * & * & * & * & * & * 
    \end{array}\right].
\end{equation}
Therefore, the two blocks should be unitary submatrices for the coin in Eq.~\eqref{eq:3p_coin_pattern} to be unitary.
The size of blocks differs according to the number of adjacent conflict nodes. 
This is different from the two player case, where the number of adjacent conflict nodes was always $0$ or $1$. 
With three players, some nodes have more than one adjacent conflict node.
The possible sizes of blocks are $2,~4,~6,$ and $8$.

We will now use a simple numerical simulation for exploring the dynamics of the conflict-avoiding QW with 3 players as modelled by a QW on a 3-torus with $N=5$.
In our numerical simulations we will use the four following submatrices for the unitary submatrices of the coin operators:
\begin{align}
    &U_2 = \frac{1}{\sqrt{2}}\left[ \begin{array}{cc}
        1 & 1 \\
        1 & -1
    \end{array} \right],\\
    &U_4 = \frac{1}{2}\left[ \begin{array}{rrrr}
        1 & 1  & 1 & 1  \\
        1 & -1 & 1 & -1   \\
        1 & 1 & -1 & -1  \\
        1 & -1 & -1 & 1  \\
    \end{array} \right],\\
    &U_6 = \frac{1}{2}\left[ \begin{array}{cccc}
        -\sqrt{2}i & 1 & i \\
        \sqrt{2} & -i & 1 \\
        0 & \sqrt{2} & -\sqrt{2}i
    \end{array} \right] \otimes \frac{1}{\sqrt{2}}\left[ \begin{array}{cc}
        1 & 1 \\
        1 & -1
    \end{array} \right],\\
    &U_8 = \frac{1}{2\sqrt{2}}\left[ \begin{array}{rrrrrrrr}
        1 & 1  & 1 & 1 & 1 & 1 & 1 & 1   \\
        1 & -1 & 1 & -1 & 1 & -1 & 1 & -1   \\
        1 & 1 & -1 & -1 & 1 & 1 & -1 & -1 \\
        1 & -1 & -1 & 1 & 1 & -1 & -1 & 1 \\
        1 & 1 & 1 & 1 & -1 & -1 & -1 & -1 \\
        1 & -1 & 1 & -1 & -1 & 1 & -1 & 1\\
        1 & 1  & -1 & -1 & -1 & -1 & 1 & 1\\
        1 & -1 & -1 & 1 & -1 & 1 & 1 & -1
    \end{array} \right].
\end{align}

Figure~\ref{fig:3p_avd}(a) shows the resultant average probability distribution when the quantum walker starts at node $(1,\,2,\,3)$.
Here, we represent the positions $(x,\,y,\,z)$ as five separate 2D slices, with the obtained average probability distribution indicated in red.

We can see that the method as generalized to the three player case does indeed work. 
All conflict nodes such as $(1,\,1,\,1)$ or $(1,\,1,\,2)$ have zero probability, because the walker never reaches the conflict nodes.

However, an additional problem becomes immediately obvious: There are $5 \times 4 \times 3 = 60$ non-conflict nodes, whereas only $30$ nodes have non-zero average probability in Fig.~\ref{fig:3p_avd}(a).
Conversely, when the walker starts at node $(3,\,2,\,1)$, the other half of non-conflict nodes have non-zero probability (see  Fig.~\ref{fig:3p_avd}(b)).
Nodes possessing non-zero probability in Fig.~\ref{fig:3p_avd}(a) have zero probability in Fig.~\ref{fig:3p_avd}(b) and vice versa.

This problem arises due to the effective network structure of the quantum walk.
The walker can move from non-conflict nodes to non-conflict nodes and from conflict nodes to conflict nodes.
However, it cannot move from non-conflict nodes to conflict nodes nor from conflict nodes to non-conflict nodes.
When the number of players is three, the probability of the conflict nodes is much higher compared to the case of two players.
As a result, the network decomposes into several subnetworks.
While this result does indeed avoid conflicts, it also violates the requirements for decision-making with multiple parties.
In particular, each player should have the possibility to choose the same options, at least in theory. 
Moreover, all collective choices that are non-conflict should be allowed. 
We do not achieve this in this case.

A way to overcome this problem is to make the starting position of the quantum walk distributed over several nodes. 
In practice, we would like to avoid requiring a complex procedure, and limit ourselves to as few starting positions as possible. 
In order to understand how to implement the decision making effectively, we need to understand the structure of the subnetworks that the system decomposes into, which we will be studying in the following.

\subsection{Network decomposition}
\label{section:net_dec}

\renewcommand{\arraystretch}{1.25}
\begin{table*}[t]
    \centering
    \caption{Number of nodes in each subnetwork. Some sizes are only conjectured based on the simulations in Fig.~\ref{fig:networks} in the supplementary material, while for some cases the sizes are strictly proven in Theorems~\ref{thm:odd} and \ref{thm:even}. SN stands for the subnetwork. Nodes are decomposed into multiple subnetworks when inner states are reflected, as shown in Fig.~\ref{fig:reflection_image}.}
    \label{tab:deduction}
    \begin{tabular}{lll}
        \toprule
         & When $N$ is odd & When $N$ is even \\
        \midrule
        \midrule
        \#Total nodes & $N^3$ & $N^3$ \\
        \midrule
        Parity & $N^3$ & $N^3/4$ \\
        \midrule
        \#Non-conflict nodes & $N(N-1)(N-2)$ & $N(N-1)(N-2)$ \\
         & SN1: $N(N-1)(N-2)/2$ & SN1: $N(N-2)(N-4)/8$ \\
        \#Nodes in each & SN2: $N(N-1)(N-2)/2$ & SN2: $N(N-2)(N-4)/8$ \\
        subnetwork  &  & SN3: $N^3/4 - N^2/2$ \\
         &  & SN4: $N^3/4 - N^2/2$ \\
         &  & SN5: $N^3/4 - N^2/2$ \\
        \midrule
        \#Conflict nodes & $3N^2 - 2N$ & $3N^2 - 2N$ \\
        \#Nodes in each & SN3: $3N^2 - 2N$ & SN6: $3N^2/2 - 2N$ \\
        subnetwork  &  & SN7: $N^2/2$ \\
         &  & SN8: $N^2/2$ \\
         &  & SN9: $N^2/2$ \\
        \bottomrule
    \end{tabular}
\end{table*}
\renewcommand{\arraystretch}{1.0}

\begin{figure}[b]
   \centering
    \includegraphics[width=0.45\textwidth]{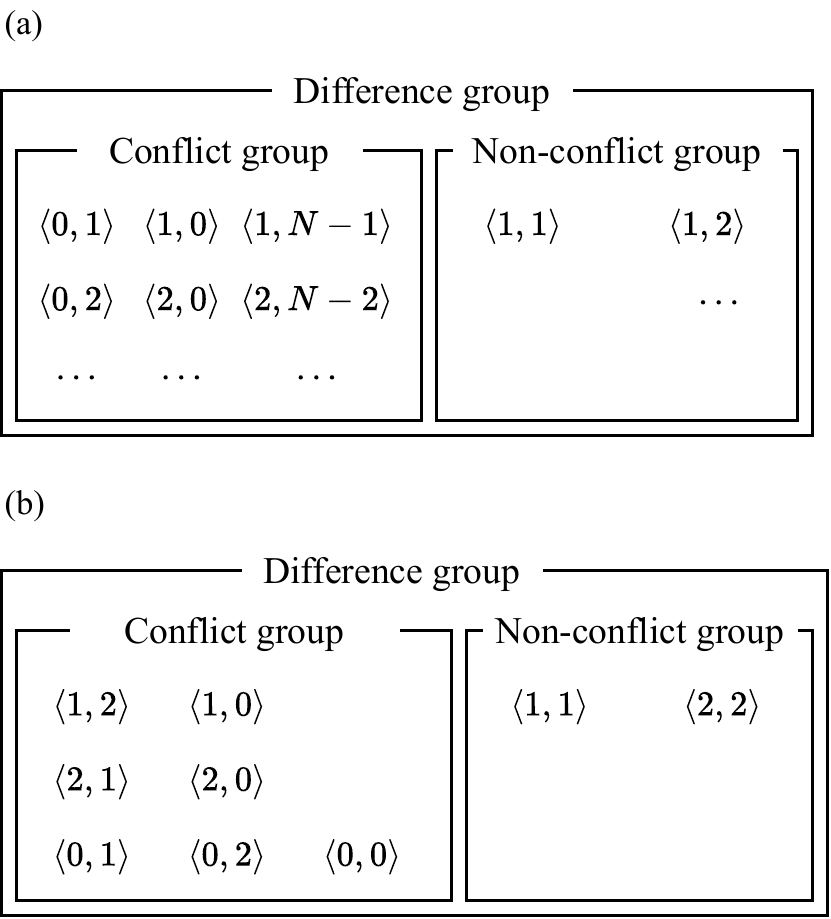}
   \caption{Examples of conflict and non-conflict groups (a) for general $N$ and (b) when $N=3$. Each difference group is either a conflict group or a non-conflict group. The difference group is a means of efficiently classifying nodes. The difference group $\langle l, m \rangle$ be the set of all nodes $(i,\,j,\,k)$ with $l = (j-i )~\text{mod}~N$ and $m = (k-j )~\text{mod}~N$.}
   \label{fig:ben}
\end{figure}

\begin{figure}[b]
   \centering
    \includegraphics[width=0.45\textwidth]{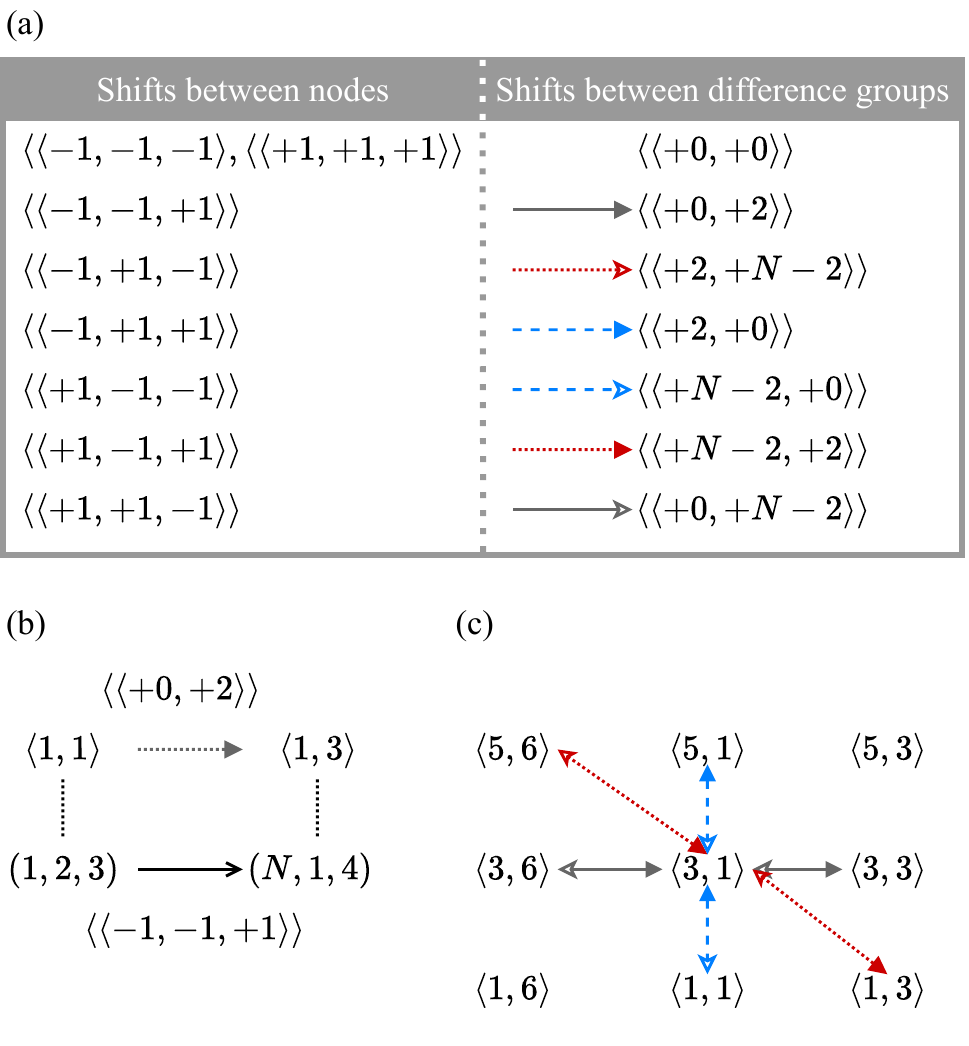}
   \caption{(a) Possible shifts between nodes and corresponding shifts between difference groups. There are $2^3=8$ shifts between nodes in total. They are mapped to seven possible shifts between difference groups.
   (b) Example of shifts between nodes and difference groups. (c) Example of possible shifts from difference group $\langle 3, 1\rangle$ when $N=7$. We need to consider six directions.}
   \label{fig:shifts}
\end{figure}

\begin{figure*}[t]
   \centering
    \includegraphics[width=0.8\textwidth]{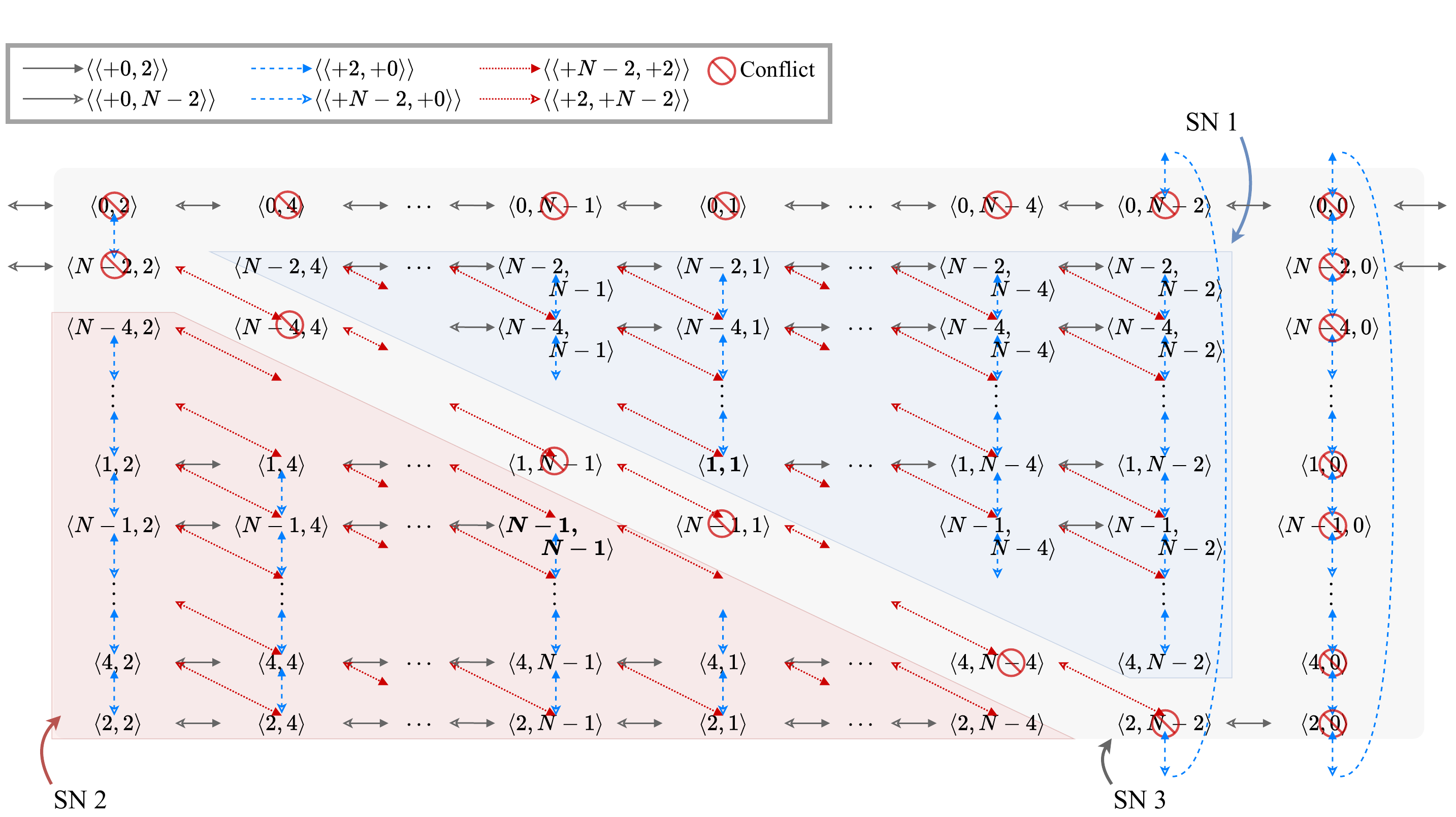}
   \caption{Transition between difference groups when $N$ is an any odd number. SN stands for the subnetwork. Note, that the 2D grid is ordered in a way where the difference group $\langle 1,1 \rangle$ is not next to $\langle 1, 2 \rangle$, as these difference groups are not connected by transitions.}
   \label{fig:odd_general}
\end{figure*}

The network of the quantum walker for three-players and $N$ options moves on the surface of a $3$-torus.
When the coins are chosen such that a walker starting on a non-conflict node cannot reach conflict nodes, this automatically decomposes the network into subnetworks.
The number and size of these subnetworks depends on the number of options $N$. 

To deduce the sizes of the subnetworks, we reduce the dimension of the problem by one, through a conversion from nodes $(i,\,j,\,k)$ to ``difference groups."
Let difference group $\langle l, m \rangle$ be the set of all nodes $(i,\,j,\,k)$ with $l = (j-i )~\text{mod}~N$ and $m = (k-j )~\text{mod}~N$.
For example, nodes $(1,\,2,\,3)$, $(2,\,3,\,4)$, $\cdots$, $(N, 1, 2)$ all belong to difference group $\langle 1, 1 \rangle$, and nodes $(1,\,2,\,4)$, $(2,\,3,\,5)$, $\cdots$, $(N,\,1,\,3)$ belong to difference group $\langle 1, 2 \rangle$.

\begin{theorem}
    Nodes inside a difference group are either all conflict or all non-conflict.    
    A difference group $\langle l, m \rangle$ is a conflict group exactly when $l=N$ or $m=N$ or $l+m=N$.
\end{theorem}
\begin{proof}
    This trivially follows from the definition of conflict nodes and difference groups.
\end{proof}

We can therefore divide the difference groups into conflict groups and non-conflict groups (Fig.~\ref{fig:ben}).
For example, the difference group $\langle N, 1 \rangle$ is a conflict group because it has nodes $(1,\,1,\,2)$, $\cdots$, $(N,\,N,\,1)$, which are conflict nodes since the first and second elements are identical.

Even though we have reduced the dimensionality of the problem, the transitions between conflict and non-conflict groups reflect the original transitions between conflict and non-conflict nodes.
We must therefore study the shifts between the original nodes on the 3-torus, where the walker has to walk either right ($+1$) or left ($-1$) in every dimensions. 
Each of the $ 2^3 = 8$ possible directions on the 3-torus can be translated to a unique transition between difference groups. 
Figure~\ref{fig:shifts} lists all possible shifts between nodes and their corresponding shifts between difference groups.

We can represent the difference groups as a separate $2$D-grid. 
Importantly, this is not the grid on which the quantum walker walks, as each point represents a group of nodes, and not just a single spatial position.
The transition between these groups is also different than the original movement on the 3-torus. 
In particular, as seen in Fig.~\ref{fig:shifts}(a), the walker can remain in the same difference group for multiple time steps.
This is due to the fact that it is possible to walk from one node in the same difference group to another, e.g., the walker may move from $(1,\,2,\,3)$ to $(2,\,3,\,4)$, remaining in the $\langle 1, 1\rangle$ difference group.
As Fig.~\ref{fig:shifts}(c) visually shows, each difference group in general has six ways of transitions on the 2D-grid; up, down, left, right, left-up, and right-down.
Additionally, it may remain at it's current position. 
Going down-left or right-up, however, is forbidden, as this would correspond to jumps between non-neighboring nodes in the underlying quantum walk.
These transitions between difference group are further modified by our choice of coin operator $C(x,\,y,\,z)$. 
Transitions from conflict nodes to non-conflict groups and vice versa are disallowed.

\begin{theorem}
\label{theo_same_diff_group}
    Nodes in the same difference group belong to the same subnetwork.
\end{theorem}

\begin{proof}
    Transitions from the spatial nodes $(i,\,j,\,k)$ to $(i+1,\,j+1,\,k+1)$ or $(i-1,\,j-1,\,k-1)$ do not change the difference group.
    Since nodes inside the same difference group are either all conflict or all non-conflict nodes, this transition is not a transition between conflict and non-conflict nodes, and is therefore allowed by the coin operators $C(x,\,y,\,z)$. 
    By repeatedly applying this transition, it is easy to see that we can reach all nodes within the same difference group.
    All nodes within a difference group are therefore part of the same subnetwork.
\end{proof}

Because of Theorem \ref{theo_same_diff_group}, considering the transition between difference groups is sufficient for analysing subnetworks.

First, the case where $N$ is odd in Table~\ref{tab:deduction} is proven.
\begin{theorem}\label{thm:odd}
    When $N$ is odd, there are three subnetworks. 
    Two subnetworks have $N(N-1)(N-2)/2$ non-conflict nodes, and the other one has $3N^2-2N$ conflict nodes.
\end{theorem}

\begin{proof}
   
    When the walker starts from non-conflict nodes (groups), it cannot go to conflict nodes (groups), and vice versa.
    This is why there are separate subnetworks.
    Figure~\ref{fig:7and8}(a) in the supplementary material depicts the transition between difference groups where $N=7$.    

    For general odd $N$, Fig.~\ref{fig:odd_general} shows all $N^2$ difference groups and all possible transitions.
    We will use this general graphical layout as the basis of our proof.
    Note, that the system is unfolded in a way where the difference group $\langle 1,1 \rangle$ is not next to $\langle 1, 2 \rangle$, as these difference groups are not connected by transitions.
    First, we consider the movement limit from a difference group $\langle 1, 1 \rangle$, which is a non-conflict group.
    It is necessary to consider movement in six different directions shown in Fig.~\ref{fig:shifts}.
    Regarding upward movement from $\langle 1, 1 \rangle$, it is only possible to move up to $\langle N-2, 1 \rangle$ because the next position, $\langle 0, 1 \rangle$, is a conflict group.
    Downward movement is not allowed because the position below, $\langle N-1, 1 \rangle$, is a conflict group.
    Leftward movement is also not allowed because the left adjacent difference group is the conflict group $\langle 1, N-1 \rangle$.
    Regarding the movement to the right, it is only possible to move up to $\langle 1, N-2 \rangle$ because the position on the right, $\langle 1, 0 \rangle$, is a conflict group.
    Regarding the diagonal upward movement to the left, it is only possible to move up to $\langle N-2, 4 \rangle$ because the position at the upper left, $\langle 0, 2 \rangle$, is a conflict group.
    With respect to diagonal downward movement to the right, it is only possible to move up to $\langle 4, N-2 \rangle$ because the position on the lower right, $\langle 2, 0 \rangle$, is a conflict group.
    For the above reasons, the difference group that can be moved to from $\langle 1, 1 \rangle$ is limited to the region shown in Fig.~\ref{fig:odd_general} as SN1, which is a right-angled triangular area with the hypotenuse on the lower left.
    The number of its difference groups is 
    \begin{equation}
        \frac{1}{2}(N-2+1)(N-2) = \frac{(N-1)(N-2)}{2}.
    \end{equation}

    Similarly, the range of movement from $\langle N-1, N-1 \rangle$ is limited to a right-angled triangular area with the hypotenuse pointing to the upper right.
    This forms subnetwork 2, and the number of difference groups is $(N-1)(N-2)/2$, too.

    All other conflict groups belong to subnetwork 3 because there exists a route that connects all of the conflict groups.
    The number of difference group belonging to subnetwork 3 is 
    \begin{equation}
        N^2 - \frac{(N-1)(N-2)}{2}\cdot 2=3N-2.
    \end{equation}

    The number of nodes in each subnetwork is $N$ times greater than the number of difference groups because each different groups has $N$ nodes due to Theorem \ref{theo_same_diff_group}.

\end{proof}

Second, the case where $N$ is even in Table~\ref{tab:deduction} is proven. 
\begin{theorem}\label{thm:even}
    When $N$ is even, there are nine subnetworks.
    Non-conflict nodes are divided into two subnetworks with $N(N-2)(N-4)/8$ nodes, and three subnetworks with $N^3/4-N^2/2$ nodes.
    Conflict nodes are divided into one subnetwork with $3N^2/2-2N$ nodes and three subnetworks with $N^2/2$ nodes.
\end{theorem}

\begin{proof}
    When $N$ is even, parity plays an important role.
    For a difference group $\langle m, n \rangle$, we call $(m+n)\mathrel{\mathrm{mod}}2$ its parity.
    When $N$ is even, difference groups cannot transition to other difference groups with different parity, as transitions between difference group always change $m$ and $n$ by even amounts. 
    Therefore, all difference groups are divided into four quadrants evenly; the odd-odd quadrant consisting of difference groups whose elements are both odd, the odd-even quadrant consisting of difference groups whose first element is even and second element is odd, the even-odd quadrant consisting of difference groups whose first element is even and second element is odd, and the even-even quadrant consisting of difference groups whose elements are both even.
    The walker cannot leave the quadrant it is starting in.


    For general even $N$, Fig.~\ref{fig:general_even} in the supplementary material shows all $N^2$ difference groups and all possible transitions.
    Difference groups are divided into four quadrants with $N^2/4$ difference groups each.
    The full deduction of the size and number of subnetworks is shown in the supplementary material.
    
\end{proof}

Table~\ref{tab:deduction} summarizes the sizes of the subnetworks.
This network structure analysis allows us to set appropriate condition for the initial state so that all non-conflict nodes will have non-zero observation probabilities.
A particularly parsimonious choice here is to choose one node per non-conflict subnetwork, and let the initial state should be a superposition of these nodes.
This will lead to all non-conflict nodes showing non-zero probability.
By choosing the proper starting amplitude, we can also influence the relative probabilities of different nodes and subnetworks. 
This crucial aspect allows controlling the time evolution, and thus controlling the relative selection rates of the options in the collective decision-making problem.
A systematic analysis thereof is out of the scope of the current manuscript, but a necessary step for future uses of quantum walks for multi-player decision making.



The subnetwork analysis reveals an appropriate initial state. 
In other words, the initial state needs to have probability amplitudes in all the disconnected subnetworks for all non-conflict nodes to be accessible.
As previously mentioned, there are two separate subnetworks for the case of $N=5$. By chosing a probability amplitude of $1/\sqrt{2}$ for nodes $(1,\,2,\,3)$ and $(3,\,2,\,1)$ as the intial state, we are seeding both subnetworks.
The results of our numerical simulations are shown in Figure~\ref{fig:3p_avd}(c).
It succesfully demonstrates that all non-conflict nodes have non-zero probability for the case of $N=5$.
Naturally, one could have explored the size and number of subnetworks purely numerically, e.g., the structure for $N=5$ was already apparent from our previous simulations shown in Figure~\ref{fig:3p_avd}(a) and~(b). 
However, systematically analyzing the topology of the resulting graphs makes it possible to treat the case of an arbitrary number of arms $N$, i.e., decision problems with an arbitrary number of options.

\label{section:threeplayers}


\section{Conclusion}
The purpose of this paper is to develop the framework for using quantum walks as tools in solving decision-making problems, in particular, by avoiding decision conflicts when multiple agents choose the same options.
First, we showed that a simple tensor product of two $1$D-quantum walks cannot avoid conflicts, even when considering entangled initial states. 
However, when we allow the coin operator to be entangled, i.e., when the process is better described by a single quantum walker on a $2$D torus, perfect conflict avoidance is possible by correctly choosing the coins.

The generalization of this concept to the three player case required additional considerations.
We showed the result of a ``naive" way to generalize via numerical simulation. And while decision conflicts can be completely avoided, only half of the non-conflict nodes had non-zero observation probabilities.
This is due to the fact, that the restrictions imposed on the QW via conflict-avoidance cause the network to decompose into several subnetworks, that remain functionally separate.
We derived the size and attributes of the subnetworks by reducing the problem to that of ``difference groups" in the three player case.
Numerical simulations show that, when considering these subnetworks, it is possible to arrive at probability distributions suitable for decision-making.

Our theory indicates a clear path towards further generalization for more players. However, the relative number of conflict nodes increases with the number of players, and as such, the topological structure of the subnetworks will become even more intricate.


\begin{acknowledgments}
This research was funded in part by the Japan Society for the Promotion of Science through Grant-in-Aid for Transformative Research Areas (A) (JP22H05197) and Grant-in-Aid for JSPS Fellows (JP24KJ0864). 
\end{acknowledgments}



\nocite{*}

\bibliography{bib}

\begin{thebibliography}{24}%
\makeatletter
\providecommand \@ifxundefined [1]{%
 \@ifx{#1\undefined}
}%
\providecommand \@ifnum [1]{%
 \ifnum #1\expandafter \@firstoftwo
 \else \expandafter \@secondoftwo
 \fi
}%
\providecommand \@ifx [1]{%
 \ifx #1\expandafter \@firstoftwo
 \else \expandafter \@secondoftwo
 \fi
}%
\providecommand \natexlab [1]{#1}%
\providecommand \enquote  [1]{``#1''}%
\providecommand \bibnamefont  [1]{#1}%
\providecommand \bibfnamefont [1]{#1}%
\providecommand \citenamefont [1]{#1}%
\providecommand \href@noop [0]{\@secondoftwo}%
\providecommand \href [0]{\begingroup \@sanitize@url \@href}%
\providecommand \@href[1]{\@@startlink{#1}\@@href}%
\providecommand \@@href[1]{\endgroup#1\@@endlink}%
\providecommand \@sanitize@url [0]{\catcode `\\12\catcode `\$12\catcode `\&12\catcode `\#12\catcode `\^12\catcode `\_12\catcode `\%12\relax}%
\providecommand \@@startlink[1]{}%
\providecommand \@@endlink[0]{}%
\providecommand \url  [0]{\begingroup\@sanitize@url \@url }%
\providecommand \@url [1]{\endgroup\@href {#1}{\urlprefix }}%
\providecommand \urlprefix  [0]{URL }%
\providecommand \Eprint [0]{\href }%
\providecommand \doibase [0]{https://doi.org/}%
\providecommand \selectlanguage [0]{\@gobble}%
\providecommand \bibinfo  [0]{\@secondoftwo}%
\providecommand \bibfield  [0]{\@secondoftwo}%
\providecommand \translation [1]{[#1]}%
\providecommand \BibitemOpen [0]{}%
\providecommand \bibitemStop [0]{}%
\providecommand \bibitemNoStop [0]{.\EOS\space}%
\providecommand \EOS [0]{\spacefactor3000\relax}%
\providecommand \BibitemShut  [1]{\csname bibitem#1\endcsname}%
\let\auto@bib@innerbib\@empty
\bibitem [{\citenamefont {Mehonic}\ and\ \citenamefont {Kenyon}(2022)}]{mehonic2022brain}%
  \BibitemOpen
  \bibfield  {author} {\bibinfo {author} {\bibfnamefont {A.}~\bibnamefont {Mehonic}}\ and\ \bibinfo {author} {\bibfnamefont {A.~J.}\ \bibnamefont {Kenyon}},\ }\bibfield  {title} {\bibinfo {title} {Brain-inspired computing needs a master plan},\ }\href@noop {} {\bibfield  {journal} {\bibinfo  {journal} {Nature}\ }\textbf {\bibinfo {volume} {604}},\ \bibinfo {pages} {255} (\bibinfo {year} {2022})}\BibitemShut {NoStop}%
\bibitem [{\citenamefont {Meier}\ and\ \citenamefont {Yamasaki}(2023)}]{meier2023energy}%
  \BibitemOpen
  \bibfield  {author} {\bibinfo {author} {\bibfnamefont {F.}~\bibnamefont {Meier}}\ and\ \bibinfo {author} {\bibfnamefont {H.}~\bibnamefont {Yamasaki}},\ }\bibfield  {title} {\bibinfo {title} {Energy-consumption advantage of quantum computation},\ }\href@noop {} {\bibfield  {journal} {\bibinfo  {journal} {arXiv preprint arXiv:2305.11212}\ } (\bibinfo {year} {2023})}\BibitemShut {NoStop}%
\bibitem [{\citenamefont {Grover}(1996)}]{grover1996fast}%
  \BibitemOpen
  \bibfield  {author} {\bibinfo {author} {\bibfnamefont {L.~K.}\ \bibnamefont {Grover}},\ }\bibfield  {title} {\bibinfo {title} {A fast quantum mechanical algorithm for database search},\ }in\ \href@noop {} {\emph {\bibinfo {booktitle} {Proceedings of the twenty-eighth annual ACM symposium on Theory of computing}}}\ (\bibinfo {year} {1996})\ pp.\ \bibinfo {pages} {212--219}\BibitemShut {NoStop}%
\bibitem [{\citenamefont {Bennett}\ \emph {et~al.}(1997)\citenamefont {Bennett}, \citenamefont {Bernstein}, \citenamefont {Brassard},\ and\ \citenamefont {Vazirani}}]{bennett1997strengths}%
  \BibitemOpen
  \bibfield  {author} {\bibinfo {author} {\bibfnamefont {C.~H.}\ \bibnamefont {Bennett}}, \bibinfo {author} {\bibfnamefont {E.}~\bibnamefont {Bernstein}}, \bibinfo {author} {\bibfnamefont {G.}~\bibnamefont {Brassard}},\ and\ \bibinfo {author} {\bibfnamefont {U.}~\bibnamefont {Vazirani}},\ }\bibfield  {title} {\bibinfo {title} {Strengths and weaknesses of quantum computing},\ }\href@noop {} {\bibfield  {journal} {\bibinfo  {journal} {SIAM journal on Computing}\ }\textbf {\bibinfo {volume} {26}},\ \bibinfo {pages} {1510} (\bibinfo {year} {1997})}\BibitemShut {NoStop}%
\bibitem [{\citenamefont {Harrow}\ \emph {et~al.}(2009)\citenamefont {Harrow}, \citenamefont {Hassidim},\ and\ \citenamefont {Lloyd}}]{Harrow2009quantum}%
  \BibitemOpen
  \bibfield  {author} {\bibinfo {author} {\bibfnamefont {A.~W.}\ \bibnamefont {Harrow}}, \bibinfo {author} {\bibfnamefont {A.}~\bibnamefont {Hassidim}},\ and\ \bibinfo {author} {\bibfnamefont {S.}~\bibnamefont {Lloyd}},\ }\bibfield  {title} {\bibinfo {title} {Quantum algorithm for linear systems of equations},\ }\href@noop {} {\bibfield  {journal} {\bibinfo  {journal} {Phys. Rev. Lett.}\ }\textbf {\bibinfo {volume} {103}},\ \bibinfo {pages} {150502} (\bibinfo {year} {2009})}\BibitemShut {NoStop}%
\bibitem [{\citenamefont {Shor}(1994)}]{Shor1994algorithms}%
  \BibitemOpen
  \bibfield  {author} {\bibinfo {author} {\bibfnamefont {P.}~\bibnamefont {Shor}},\ }\bibfield  {title} {\bibinfo {title} {Algorithms for quantum computation: discrete logarithms and factoring},\ }in\ \href@noop {} {\emph {\bibinfo {booktitle} {Proceedings 35th Annual Symposium on Foundations of Computer Science}}}\ (\bibinfo {year} {1994})\ pp.\ \bibinfo {pages} {124--134}\BibitemShut {NoStop}%
\bibitem [{\citenamefont {Douglas}\ and\ \citenamefont {Wang}(2009)}]{douglas2009efficient}%
  \BibitemOpen
  \bibfield  {author} {\bibinfo {author} {\bibfnamefont {B.}~\bibnamefont {Douglas}}\ and\ \bibinfo {author} {\bibfnamefont {J.}~\bibnamefont {Wang}},\ }\bibfield  {title} {\bibinfo {title} {Efficient quantum circuit implementation of quantum walks},\ }\href@noop {} {\bibfield  {journal} {\bibinfo  {journal} {Physical Review A}\ }\textbf {\bibinfo {volume} {79}},\ \bibinfo {pages} {052335} (\bibinfo {year} {2009})}\BibitemShut {NoStop}%
\bibitem [{\citenamefont {Magniez}\ \emph {et~al.}(2007)\citenamefont {Magniez}, \citenamefont {Nayak}, \citenamefont {Roland},\ and\ \citenamefont {Santha}}]{magniez2007search}%
  \BibitemOpen
  \bibfield  {author} {\bibinfo {author} {\bibfnamefont {F.}~\bibnamefont {Magniez}}, \bibinfo {author} {\bibfnamefont {A.}~\bibnamefont {Nayak}}, \bibinfo {author} {\bibfnamefont {J.}~\bibnamefont {Roland}},\ and\ \bibinfo {author} {\bibfnamefont {M.}~\bibnamefont {Santha}},\ }\bibfield  {title} {\bibinfo {title} {Search via quantum walk},\ }in\ \href@noop {} {\emph {\bibinfo {booktitle} {Proceedings of the thirty-ninth annual ACM symposium on Theory of computing}}}\ (\bibinfo {year} {2007})\ pp.\ \bibinfo {pages} {575--584}\BibitemShut {NoStop}%
\bibitem [{\citenamefont {Kendon}(2011)}]{kendon2011quantum}%
  \BibitemOpen
  \bibfield  {author} {\bibinfo {author} {\bibfnamefont {V.}~\bibnamefont {Kendon}},\ }\bibfield  {title} {\bibinfo {title} {Where to quantum walk},\ }\href@noop {} {\bibfield  {journal} {\bibinfo  {journal} {arXiv preprint arXiv:1107.3795}\ } (\bibinfo {year} {2011})}\BibitemShut {NoStop}%
\bibitem [{\citenamefont {Konno}(2008)}]{konno2008quantum}%
  \BibitemOpen
  \bibfield  {author} {\bibinfo {author} {\bibfnamefont {N.}~\bibnamefont {Konno}},\ }\bibfield  {title} {\bibinfo {title} {Quantum walks},\ }\href@noop {} {\bibfield  {journal} {\bibinfo  {journal} {Lecture notes in mathematics}\ }\textbf {\bibinfo {volume} {1954}},\ \bibinfo {pages} {309} (\bibinfo {year} {2008})}\BibitemShut {NoStop}%
\bibitem [{\citenamefont {Childs}(2009)}]{childs2009universal}%
  \BibitemOpen
  \bibfield  {author} {\bibinfo {author} {\bibfnamefont {A.~M.}\ \bibnamefont {Childs}},\ }\bibfield  {title} {\bibinfo {title} {Universal computation by quantum walk},\ }\href@noop {} {\bibfield  {journal} {\bibinfo  {journal} {Physical review letters}\ }\textbf {\bibinfo {volume} {102}},\ \bibinfo {pages} {180501} (\bibinfo {year} {2009})}\BibitemShut {NoStop}%
\bibitem [{\citenamefont {Konno}(2019)}]{konno2019new}%
  \BibitemOpen
  \bibfield  {author} {\bibinfo {author} {\bibfnamefont {N.}~\bibnamefont {Konno}},\ }\bibfield  {title} {\bibinfo {title} {A new time-series model based on quantum walk},\ }\href@noop {} {\bibfield  {journal} {\bibinfo  {journal} {Quantum Studies: Mathematics and Foundations}\ }\textbf {\bibinfo {volume} {6}},\ \bibinfo {pages} {61} (\bibinfo {year} {2019})}\BibitemShut {NoStop}%
\bibitem [{\citenamefont {Abd EL-Latif}\ \emph {et~al.}(2019)\citenamefont {Abd EL-Latif}, \citenamefont {Abd-El-Atty},\ and\ \citenamefont {Venegas-Andraca}}]{abd2019novel}%
  \BibitemOpen
  \bibfield  {author} {\bibinfo {author} {\bibfnamefont {A.~A.}\ \bibnamefont {Abd EL-Latif}}, \bibinfo {author} {\bibfnamefont {B.}~\bibnamefont {Abd-El-Atty}},\ and\ \bibinfo {author} {\bibfnamefont {S.~E.}\ \bibnamefont {Venegas-Andraca}},\ }\bibfield  {title} {\bibinfo {title} {A novel image steganography technique based on quantum substitution boxes},\ }\href@noop {} {\bibfield  {journal} {\bibinfo  {journal} {Optics \& Laser Technology}\ }\textbf {\bibinfo {volume} {116}},\ \bibinfo {pages} {92} (\bibinfo {year} {2019})}\BibitemShut {NoStop}%
\bibitem [{\citenamefont {Abd-El-Atty}\ \emph {et~al.}(2019)\citenamefont {Abd-El-Atty}, \citenamefont {Abd El-Latif},\ and\ \citenamefont {Venegas-Andraca}}]{abd2019encryption}%
  \BibitemOpen
  \bibfield  {author} {\bibinfo {author} {\bibfnamefont {B.}~\bibnamefont {Abd-El-Atty}}, \bibinfo {author} {\bibfnamefont {A.~A.}\ \bibnamefont {Abd El-Latif}},\ and\ \bibinfo {author} {\bibfnamefont {S.~E.}\ \bibnamefont {Venegas-Andraca}},\ }\bibfield  {title} {\bibinfo {title} {An encryption protocol for neqr images based on one-particle quantum walks on a circle},\ }\href@noop {} {\bibfield  {journal} {\bibinfo  {journal} {Quantum Information Processing}\ }\textbf {\bibinfo {volume} {18}},\ \bibinfo {pages} {272} (\bibinfo {year} {2019})}\BibitemShut {NoStop}%
\bibitem [{\citenamefont {Yang}\ \emph {et~al.}(2016)\citenamefont {Yang}, \citenamefont {Xu}, \citenamefont {Yang}, \citenamefont {Zhou},\ and\ \citenamefont {Shi}}]{yang2016quantum}%
  \BibitemOpen
  \bibfield  {author} {\bibinfo {author} {\bibfnamefont {Y.-G.}\ \bibnamefont {Yang}}, \bibinfo {author} {\bibfnamefont {P.}~\bibnamefont {Xu}}, \bibinfo {author} {\bibfnamefont {R.}~\bibnamefont {Yang}}, \bibinfo {author} {\bibfnamefont {Y.-H.}\ \bibnamefont {Zhou}},\ and\ \bibinfo {author} {\bibfnamefont {W.-M.}\ \bibnamefont {Shi}},\ }\bibfield  {title} {\bibinfo {title} {Quantum hash function and its application to privacy amplification in quantum key distribution, pseudo-random number generation and image encryption},\ }\href@noop {} {\bibfield  {journal} {\bibinfo  {journal} {Scientific Reports}\ }\textbf {\bibinfo {volume} {6}},\ \bibinfo {pages} {19788} (\bibinfo {year} {2016})}\BibitemShut {NoStop}%
\bibitem [{\citenamefont {Mart{\'\i}nez-Mart{\'\i}nez}\ and\ \citenamefont {S{\'a}nchez-Burillo}(2016)}]{martinez2016quantum}%
  \BibitemOpen
  \bibfield  {author} {\bibinfo {author} {\bibfnamefont {I.}~\bibnamefont {Mart{\'\i}nez-Mart{\'\i}nez}}\ and\ \bibinfo {author} {\bibfnamefont {E.}~\bibnamefont {S{\'a}nchez-Burillo}},\ }\bibfield  {title} {\bibinfo {title} {Quantum stochastic walks on networks for decision-making},\ }\href@noop {} {\bibfield  {journal} {\bibinfo  {journal} {Scientific Reports}\ }\textbf {\bibinfo {volume} {6}},\ \bibinfo {pages} {23812} (\bibinfo {year} {2016})}\BibitemShut {NoStop}%
\bibitem [{\citenamefont {Zhang}\ and\ \citenamefont {Busemeyer}(2021)}]{zhang2021quantum}%
  \BibitemOpen
  \bibfield  {author} {\bibinfo {author} {\bibfnamefont {Q.}~\bibnamefont {Zhang}}\ and\ \bibinfo {author} {\bibfnamefont {J.}~\bibnamefont {Busemeyer}},\ }\bibfield  {title} {\bibinfo {title} {A quantum walk model for idea propagation in social network and group decision making},\ }\href@noop {} {\bibfield  {journal} {\bibinfo  {journal} {Entropy}\ }\textbf {\bibinfo {volume} {23}},\ \bibinfo {pages} {622} (\bibinfo {year} {2021})}\BibitemShut {NoStop}%
\bibitem [{\citenamefont {Busemeyer}\ \emph {et~al.}(2006)\citenamefont {Busemeyer}, \citenamefont {Wang},\ and\ \citenamefont {Townsend}}]{busemeyer2006quantum}%
  \BibitemOpen
  \bibfield  {author} {\bibinfo {author} {\bibfnamefont {J.~R.}\ \bibnamefont {Busemeyer}}, \bibinfo {author} {\bibfnamefont {Z.}~\bibnamefont {Wang}},\ and\ \bibinfo {author} {\bibfnamefont {J.~T.}\ \bibnamefont {Townsend}},\ }\bibfield  {title} {\bibinfo {title} {Quantum dynamics of human decision-making},\ }\href@noop {} {\bibfield  {journal} {\bibinfo  {journal} {Journal of Mathematical Psychology}\ }\textbf {\bibinfo {volume} {50}},\ \bibinfo {pages} {220} (\bibinfo {year} {2006})}\BibitemShut {NoStop}%
\bibitem [{\citenamefont {Yan}\ \emph {et~al.}(2022)\citenamefont {Yan}, \citenamefont {Liang},\ and\ \citenamefont {Hirota}}]{yan2022information}%
  \BibitemOpen
  \bibfield  {author} {\bibinfo {author} {\bibfnamefont {F.}~\bibnamefont {Yan}}, \bibinfo {author} {\bibfnamefont {W.}~\bibnamefont {Liang}},\ and\ \bibinfo {author} {\bibfnamefont {K.}~\bibnamefont {Hirota}},\ }\bibfield  {title} {\bibinfo {title} {An information propagation model for social networks based on continuous-time quantum walk},\ }\href@noop {} {\bibfield  {journal} {\bibinfo  {journal} {Neural Computing and Applications}\ }\textbf {\bibinfo {volume} {34}},\ \bibinfo {pages} {13455} (\bibinfo {year} {2022})}\BibitemShut {NoStop}%
\bibitem [{\citenamefont {Yamagami}\ \emph {et~al.}(2023)\citenamefont {Yamagami}, \citenamefont {Segawa}, \citenamefont {Mihana}, \citenamefont {R{\"o}hm}, \citenamefont {Horisaki},\ and\ \citenamefont {Naruse}}]{yamagami2023bandit}%
  \BibitemOpen
  \bibfield  {author} {\bibinfo {author} {\bibfnamefont {T.}~\bibnamefont {Yamagami}}, \bibinfo {author} {\bibfnamefont {E.}~\bibnamefont {Segawa}}, \bibinfo {author} {\bibfnamefont {T.}~\bibnamefont {Mihana}}, \bibinfo {author} {\bibfnamefont {A.}~\bibnamefont {R{\"o}hm}}, \bibinfo {author} {\bibfnamefont {R.}~\bibnamefont {Horisaki}},\ and\ \bibinfo {author} {\bibfnamefont {M.}~\bibnamefont {Naruse}},\ }\bibfield  {title} {\bibinfo {title} {Bandit algorithm driven by a classical random walk and a quantum walk},\ }\href@noop {} {\bibfield  {journal} {\bibinfo  {journal} {Entropy}\ }\textbf {\bibinfo {volume} {25}},\ \bibinfo {pages} {843} (\bibinfo {year} {2023})}\BibitemShut {NoStop}%
\bibitem [{\citenamefont {Amakasu}\ \emph {et~al.}(2021)\citenamefont {Amakasu}, \citenamefont {Chauvet}, \citenamefont {Huant}, \citenamefont {Horisaki},\ and\ \citenamefont {Naruse}}]{amakasu}%
  \BibitemOpen
  \bibfield  {author} {\bibinfo {author} {\bibfnamefont {T.}~\bibnamefont {Amakasu}}, \bibinfo {author} {\bibfnamefont {N.}~\bibnamefont {Chauvet}}, \bibinfo {author} {\bibfnamefont {G.}~\bibnamefont {Huant}}, \bibinfo {author} {\bibfnamefont {R.}~\bibnamefont {Horisaki}},\ and\ \bibinfo {author} {\bibfnamefont {M.}~\bibnamefont {Naruse}},\ }\bibfield  {title} {\bibinfo {title} {Conflict-free collective stochastic decision making by orbital angular momentum of photons through quantum interference},\ }\href@noop {} {\bibfield  {journal} {\bibinfo  {journal} {Scientific Reports}\ }\textbf {\bibinfo {volume} {11}},\ \bibinfo {pages} {21117} (\bibinfo {year} {2021})}\BibitemShut {NoStop}%
\bibitem [{\citenamefont {Portugal}(2013)}]{portugal2013quantum}%
  \BibitemOpen
  \bibfield  {author} {\bibinfo {author} {\bibfnamefont {R.}~\bibnamefont {Portugal}},\ }\href@noop {} {\emph {\bibinfo {title} {Quantum walks and search algorithms}}},\ Vol.~\bibinfo {volume} {19}\ (\bibinfo  {publisher} {Springer},\ \bibinfo {year} {2013})\BibitemShut {NoStop}%
\bibitem [{\citenamefont {Cerezo}\ \emph {et~al.}(2021)\citenamefont {Cerezo}, \citenamefont {Arrasmith}, \citenamefont {Babbush}, \citenamefont {Benjamin}, \citenamefont {Endo}, \citenamefont {Fujii}, \citenamefont {McClean}, \citenamefont {Mitarai}, \citenamefont {Yuan}, \citenamefont {Cincio} \emph {et~al.}}]{cerezo2021variational}%
  \BibitemOpen
  \bibfield  {author} {\bibinfo {author} {\bibfnamefont {M.}~\bibnamefont {Cerezo}}, \bibinfo {author} {\bibfnamefont {A.}~\bibnamefont {Arrasmith}}, \bibinfo {author} {\bibfnamefont {R.}~\bibnamefont {Babbush}}, \bibinfo {author} {\bibfnamefont {S.~C.}\ \bibnamefont {Benjamin}}, \bibinfo {author} {\bibfnamefont {S.}~\bibnamefont {Endo}}, \bibinfo {author} {\bibfnamefont {K.}~\bibnamefont {Fujii}}, \bibinfo {author} {\bibfnamefont {J.~R.}\ \bibnamefont {McClean}}, \bibinfo {author} {\bibfnamefont {K.}~\bibnamefont {Mitarai}}, \bibinfo {author} {\bibfnamefont {X.}~\bibnamefont {Yuan}}, \bibinfo {author} {\bibfnamefont {L.}~\bibnamefont {Cincio}}, \emph {et~al.},\ }\bibfield  {title} {\bibinfo {title} {Variational quantum algorithms},\ }\href@noop {} {\bibfield  {journal} {\bibinfo  {journal} {Nature Reviews Physics}\ }\textbf {\bibinfo {volume} {3}},\ \bibinfo {pages} {625} (\bibinfo {year} {2021})}\BibitemShut {NoStop}%
\bibitem [{\citenamefont {Chauvet}\ \emph {et~al.}(2019)\citenamefont {Chauvet}, \citenamefont {Jegouso}, \citenamefont {Boulanger}, \citenamefont {Saigo}, \citenamefont {Okamura}, \citenamefont {Hori}, \citenamefont {Drezet}, \citenamefont {Huant}, \citenamefont {Bachelier},\ and\ \citenamefont {Naruse}}]{chauvet}%
  \BibitemOpen
  \bibfield  {author} {\bibinfo {author} {\bibfnamefont {N.}~\bibnamefont {Chauvet}}, \bibinfo {author} {\bibfnamefont {D.}~\bibnamefont {Jegouso}}, \bibinfo {author} {\bibfnamefont {B.}~\bibnamefont {Boulanger}}, \bibinfo {author} {\bibfnamefont {H.}~\bibnamefont {Saigo}}, \bibinfo {author} {\bibfnamefont {K.}~\bibnamefont {Okamura}}, \bibinfo {author} {\bibfnamefont {H.}~\bibnamefont {Hori}}, \bibinfo {author} {\bibfnamefont {A.}~\bibnamefont {Drezet}}, \bibinfo {author} {\bibfnamefont {S.}~\bibnamefont {Huant}}, \bibinfo {author} {\bibfnamefont {G.}~\bibnamefont {Bachelier}},\ and\ \bibinfo {author} {\bibfnamefont {M.}~\bibnamefont {Naruse}},\ }\bibfield  {title} {\bibinfo {title} {Entangled-photon decision maker},\ }\href@noop {} {\bibfield  {journal} {\bibinfo  {journal} {Scientific Reports,}\ }\textbf {\bibinfo {volume} {9(1)}},\ \bibinfo {pages} {12229} (\bibinfo {year} {2019})}\BibitemShut {NoStop}%
\end{thebibliography}%


\begin{thebibliography}{0}%
\makeatletter
\providecommand \@ifxundefined [1]{%
 \@ifx{#1\undefined}
}%
\providecommand \@ifnum [1]{%
 \ifnum #1\expandafter \@firstoftwo
 \else \expandafter \@secondoftwo
 \fi
}%
\providecommand \@ifx [1]{%
 \ifx #1\expandafter \@firstoftwo
 \else \expandafter \@secondoftwo
 \fi
}%
\providecommand \natexlab [1]{#1}%
\providecommand \enquote  [1]{``#1''}%
\providecommand \bibnamefont  [1]{#1}%
\providecommand \bibfnamefont [1]{#1}%
\providecommand \citenamefont [1]{#1}%
\providecommand \href@noop [0]{\@secondoftwo}%
\providecommand \href [0]{\begingroup \@sanitize@url \@href}%
\providecommand \@href[1]{\@@startlink{#1}\@@href}%
\providecommand \@@href[1]{\endgroup#1\@@endlink}%
\providecommand \@sanitize@url [0]{\catcode `\\12\catcode `\$12\catcode `\&12\catcode `\#12\catcode `\^12\catcode `\_12\catcode `\%12\relax}%
\providecommand \@@startlink[1]{}%
\providecommand \@@endlink[0]{}%
\providecommand \url  [0]{\begingroup\@sanitize@url \@url }%
\providecommand \@url [1]{\endgroup\@href {#1}{\urlprefix }}%
\providecommand \urlprefix  [0]{URL }%
\providecommand \Eprint [0]{\href }%
\providecommand \doibase [0]{https://doi.org/}%
\providecommand \selectlanguage [0]{\@gobble}%
\providecommand \bibinfo  [0]{\@secondoftwo}%
\providecommand \bibfield  [0]{\@secondoftwo}%
\providecommand \translation [1]{[#1]}%
\providecommand \BibitemOpen [0]{}%
\providecommand \bibitemStop [0]{}%
\providecommand \bibitemNoStop [0]{.\EOS\space}%
\providecommand \EOS [0]{\spacefactor3000\relax}%
\providecommand \BibitemShut  [1]{\csname bibitem#1\endcsname}%
\let\auto@bib@innerbib\@empty
\end{thebibliography}%

\end{document}